\DeclareMathAlphabet\mathbfcal{OMS}{cmsy}{b}{n}
\newcommand\blfootnote[1]{%
  \begingroup
  \renewcommand\thefootnote{}\footnote{#1}%
  \addtocounter{footnote}{-1}%
  \endgroup
}
\theoremstyle{definition}
\newtheorem{Theorem}{Theorem}
\newtheorem{Remark}{Remark}
\newtheorem{Definition}{Definition}
\newcommand{\indep}{\rotatebox[origin=c]{90}{$\models$}}
\theoremstyle{definition} 
\begin{document}

\title{On Graph Matching Using Generalized Seed Side-Information
}

\author{

\IEEEauthorblockN{ Mahshad Shariatnasab$^\dagger$, Farhad Shirani$^\dagger$, Siddharth Garg$^\ddagger$, Elza Erkip$^\ddagger$}
\IEEEauthorblockA{$^{ \dagger}$North Dakota State University, $^{^\ddagger}$New York University
\\Email: $\{$mahshad.shariatnasab, f.shiranichaharsoogh$\}$@ndsu.edu, 
$\{$sg175,elza$\}$@nyu.edu
}
}

\maketitle
\begin{abstract}
In this paper, matching pairs of stocahstically generated graphs in the presence of \textit{generalized seed} side-information is considered. The graph matching problem
emerges naturally in various applications such as  social network de-anonymization, image processing, DNA sequencing, and natural language processing. 
A pair of randomly generated labeled Erd\"os-R\'enyi 
 graphs with pairwise correlated edges are considered. It is assumed that the matching strategy has access to the labeling of the vertices in the first graph, as well as a collection of shortlists --- called \textit{ambiguity sets} --- of possible labels for the vertices of the second graph. 
The objective is to leverage the correlation among the edges of the graphs along with the side-information provided in the form of ambiguity sets to recover the labels of the vertices in the second graph. This scenario can be viewed as a generalization of the seeded graph matching problem, where the ambiguity sets take a specific form such that the exact labels for a subset of vertices in the second graph are known prior to matching. 
A matching strategy is proposed which operates by evaluating the joint typicality of the adjacency matrices of the graphs. Sufficient conditions on the edge statistics as well as ambiguity set statistics are derived under which the proposed matching strategy successfully recovers the labels of the vertices in the second graph. Additionally, Fano-type arguments are used to derive general necessary conditions for successful matching.

\blfootnote{This work is supported by ND EPSCoR grant FAR0033968,
NYU WIRELESS Industrial Affiliates and
National Science Foundation grant CCF-1815821.} 
\end{abstract}

\section{Introduction}

Graphical representations capture pairwise relationships among sets of entities of interest. In such representations, the entities of interest are shown via a set of vertices, and their relationships are captured by the graph edges. 
In many applications, we are given a collection of graphs, each capturing a subset of the relationships among the same set of entities. A crucial step in analyzing such graphical data is the identification of vertices corresponding to the same entity across the graphs, i.e. to perform graph matching (Fig. \ref{fig:source}). Graph matching techniques find application in social network de-anonymization, image processing, DNA sequencing, and natural language processing \cite{caetano2009learning,conte2004thirty,liu2011retrieval,sanfeliu1983distance}.

There has been extensive research on establishing the conditions for reliable graph matching \cite{wright,lyzinski2014seeded,babai1980random,bollobas2001random,czajka2008improved, cullina2016improved,kazemi2016network,pedarsani2013bayesian, ji2014structural,lyzinski2018information, cullina2019partial,singhal2017significance}, and constructing computationally efficient matching algorithms \cite{babai2018groups,yartseva2013performance,dai2019analysis,kazemi2015growing}. 
A number of these works focus on deriving the necessary and sufficient conditions for reliable matching of pairs of graphs under random graph generation models, where the graph edges are generated stochastically and in a correlated fashion \cite{wright,lyzinski2014seeded,babai1980random,bollobas2001random,czajka2008improved,kazemi2016network,pedarsani2013bayesian, cullina2016improved, ji2014structural, lyzinski2018information, cullina2019partial,singhal2017significance}. This edge correlation is leveraged by the graph matching technique to uncover the underlying vertex alignment.  Prior works have considered the problem under a variety of stochastic models.
Graph isomorphism, considered in \cite{wright,lyzinski2014seeded,babai1980random,bollobas2001random}, studies matching two structurally equivalent graphs, i.e. graphs with the same set of vertices and edges.
In this scenario, tight necessary and sufficient conditions for successful matching have been derived when the graph edges are generated  based on the Erd\"os-R\'enyi (ER) model, i.e. when the graph edges are generated independently, and based on an identical distribution \cite{wright}. 
A more general stochastic model is considered in \cite{ji2014structural,pedarsani2013bayesian, lyzinski2018information, cullina2019partial,kazemi2016network, cullina2016improved}, where the graph edges are not exactly equal, rather they are generated based on the correlated ER stochastic model. Under this model, pairs of edges connecting corresponding vertices across the graph are generated based on a joint probability distribution, independently of all other edges. Necessary and sufficient conditions for successful graph matching have been derived. However, characterizing tight necessary and sufficient conditions remains an open problem.

\begin{figure}[!t]
\begin{center}
\includegraphics[width=0.5\columnwidth]{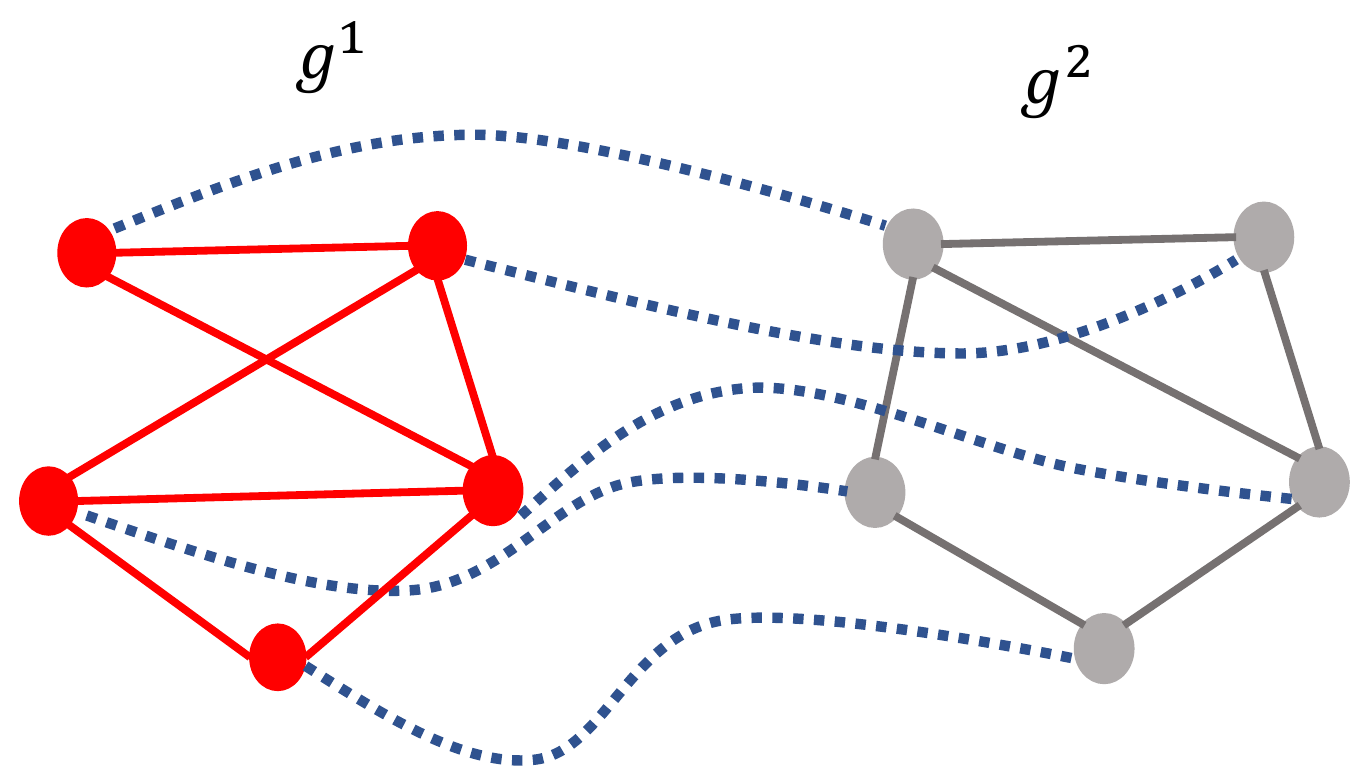}
\vspace{-0.1in}
\caption{The graphs $g^1$ (red) and $g^2$ (gray) each capture a subset of the relationships among entities represented by their vertices. A graph matching strategy finds a bijective correspondence (dotted lines) across the vertices of the two graphs based on their link structure. \vspace{-.3in}}
\label{fig:source}
\end{center}
\end{figure}

A variant of the graph matching problem --- called seeded graph matching --- considers the scenario where the correct matching for a subset of vertices is known prior to the start of the matching process \cite{kazemi2015growing,lyzinski2014seeded,seed3,seed4,cullina2016improved, Asilomar, pedarsani2013bayesian}. The vertices for which the correct matching is known are called the \textit{seed vertices}. An instance of this scenario is de-anonymization of users over multiple social networks, where the objective is to match the user profiles belonging to the same users over multiple online social networks such as Facebook, Twitter, Google+, LinkedIn, etc. In practice, a fraction of individuals publicly link their accounts across multiple networks.  It turns out, that in many cases, these linked accounts may be leveraged as seeds to identify a  large fraction of the users in the network ~\cite{kazemi2015growing,lyzinski2014seeded,seed3,seed4,cullina2016improved, Asilomar}. In this work, we generalize the notion of seed side-information, and consider a scenario in which rather than a set of seed vertices in $g^1$ (the de-anonymized graph) and $g^2$ (the anonymized graph), the matching algorithm has access to a collection of shortlists --- called \textit{ambiguity sets}. To elaborate, 
for each vertex $v_s$ in $g^2$ the matching algorithm is given an ambiguity set consisting of candidate labels $\mathcal{L}_s$, one of which is the correct label for $v_s$. Seeded graph matching is a special instance of this scenario, where  the ambiguity sets corresponding to the seed vertices have a single element (the correct label), whereas all other ambiguity sets contain all possible labels. 
Graph matching using ambiguity set side-information arises naturally in a wide range of real-world scenarios. For instance, in social network de-anonymization, such ambiguity sets can be generated based on the user's online fingerprint, i.e. `liked pages', `group memberships', etc. \cite{shirani2018optimal,shirani2017information}. Another application is image recognition \cite{liu2011retrieval}, where a `coarse matching algorithm' can be used to construct a collection of ambiguity sets containing the labels of possible matches for each image segment, followed by a `fine matching algorithm' which completes the matching of segments using the ambiguity sets as side-information. 

The contributions of this work are summarized below:
\begin{itemize}[leftmargin =*]
    \item \textcolor{black}{We} provide a stochastic model for the ambiguity sets. The proposed formulation allows for correlation among elements in each ambiguity set, as well as correlation across different ambiguity sets.
    \item \textcolor{black}{We} propose the typicality matching (TM) strategy for matching pairs of correlated graphs generated under the Erd\"os-R\`enyi model with ambiguity set side-information. This strategy build upon prior work in \cite{Asilomar,shirani2020concentration, shirani2018typicality}.
    \item \textcolor{black}{We} consider several graph matching scenarios under the proposed stochastic ambiguity set model, including seeded graph matching, equiprobable ambiguity sets, and specific scenarios with correlated ambiguity sets; and to use information theoretic analysis to derive sufficient conditions for the success of the TM strategy in each scenario.
    \item \textcolor{black}{We} use Fano-type arguments to derive necessary conditions for success of matching algorithms in the presence of ambiguity set side-information. 
\end{itemize}

\textit{Notation:} 
 Random variables are represented by capital letters such as $X$, $U$, and their realizations by small letters such as $x, u$.
 Sets are denoted by calligraphic letters such as $\mathcal{X}, \mathcal{U}$. The probability of the event $\mathcal{A}\subseteq\mathcal{X}$ is denoted by $P_X(\mathcal{A})$, and the subscript $X$ is omitted when there is no ambiguity. The expected value of $X$ is written as $\mathbb{E}(X)$.
 The set of natural numbers, and real numbers are shown by $\mathbb{N}$, and $\mathbb{R}$, respectively. The random variable $\mathbbm{1}(\mathcal{E})$ is the indicator function of the event $\mathcal{E}$.
 The set of numbers $\{n,n+1,\cdots, m\}, n,m\in \mathbb{N}$ is represented by $[n,m]$. Furthermore, for the interval $[1,m]$, we sometimes use the shorthand notation $[m]$. 
 For a given $n\in \mathbb{N}$, the $n$-length vector $(x_1,x_2,\hdots, x_n)$ is written as $x^n$.

\section{Problem Formulation}
\label{Sec:Prilim}
A weighted graph $g=(\mathcal{V}, \mathcal{E})$ is characterized by a vertex set
$\mathcal{V}=\{v_1, v_2, \cdots, v_n\}$, and an edge set $\mathcal{E} \subseteq\{(x_{s,t}, v_s, v_t) | s,t\in [1,n], s\neq t, x_{s,t}\in [0, \ell-1] \}$, where $x_{s,t}$ is the \textit{weight (attribute)} of the edge between the vertices $v_s$ and $v_t$, and $\ell$ is the number of possible edge attributes. It is assumed that for a given pair $(v_s,v_t)$ there is a unique $x_{s,t} \in [0, \ell-1] $ for which $(x_{s,t},v_s,v_t) \in \mathcal{E}$. In particular, in graphs with binary-valued edges we have $\ell= 2$. In this case, we write
 $(0,v_s,v_t) \in \mathcal{E}$ if there is no edge between the vertices $v_s$ and $v_t$, and  $(1,v_s,v_t) \in \mathcal{E}$, otherwise. In this work, we consider undirected graphs, i.e. $x_{s,t}=x_{t,s}, s,t\in [1,n]$. 
 The attributes capture the nature of the relationship between the entities represented by the vertices of the graph. For example, if $g$ is a social network graph, the edge attribute may signify that the edge corresponds to the relationship between `close friends', `family members', `acquaintances', or `colleagues'. 
 A labeling for the graph $g$ is a bijective function $\sigma: [1,n] \to [1,n]$, where $\sigma(s)$ is the label assigned to  $v_s$. A labeled graph is defined as $\tilde{g}=(g, \sigma)$.
 The adjacency matrix of $\tilde{g}$ is defined as $G= [g_{\sigma,i,j}]_{i,j\in [1,n]}$, where $g_{\sigma,i,j}, i\neq j$ is the attribute of the edge between $v_{\sigma^{-1}(i)}$ and $v_{\sigma^{-1}(j)}$, and $g_{\sigma,i,i}\triangleq 0, i\in [1,n]$.
 The structure $U=[g_{\sigma,i,j}]_{i<j}$ is called the upper triangle (UT) of the adjacency matrix.

 In this paper, we consider stochastic graphs generated under the correlated ER model, where it is assumed that the edges connecting similarly labeled vertices across the graphs are generated based on an identical distribution, independently of all other edges. The correlated ER model is formally introduced in the following definitions.
 
 \begin{Definition}[\textbf{ER Model}]
 A random graph under the ER model is parametrized by the tuple $(n, \ell,P_X)$, where $n$ is the number of vertices, $\ell$ is the number of possible edge attributes, and $P_X$ is a probability distribution on the alphabet $[0,\ell-1]$. We have:
 \[P((x_{s,t},v_s,v_t)\in \mathcal{E}, s,t\in [1,n])= \prod_{1\leq s<t \leq n}P_X(x_{s,t})\mathbbm{1}(x_{s,t}=x_{t,s}),\]
 where $x_{s,t}\in [0,\ell-1]$ and $x_{s,s}=0, s\in [1,n]$.
\end{Definition}

In this paper, we consider matching of pairs of correlated ER graphs as defined below.


\begin{Definition}[\textbf{Correlated ER Graphs}]
A pair of correlated ER graphs $(\tilde{g}^1,\tilde{g}^2)$ is parametrized by the tuple $(n,\ell,\sigma^1,\sigma^2,P_{X,Y})$, where $\sigma^1$ and $\sigma^2$ are the labeling functions for $\tilde{g}^1$ and $\tilde{g}^2$, respectively, and $P_{X,Y}$ is a probability distribution on $[0,\ell-1]\times [0,\ell-1]$. Let $v^1,w^1$ and $v^2,w^2$ be two pairs of  vertices with the same labels in $\tilde{g}^1$ and $\tilde{g}^2$, respectively, i.e. $\sigma^1(v^1)=\sigma^2(v^2)=i_1$ and $\sigma^1(w^1)=\sigma^2(w^2)=i_2$. Then, the pair of edges between $(v^1,w^1)$ and $(v^2,w^2)$ are generated according to $P_{X,Y}$, independently of all other edges. Alternatively,
\begin{align*}
  &P((x,v^1,w^1)\in \mathcal{E}^1, (y,v^2,w^2)\in \mathcal{E}^2)
 = P_{X,Y}(x,y),
  \end{align*}
  where $x,y\in [0,l-1]$.  
\end{Definition}

In graph matching under the correlated ER model, \cite{ji2014structural,pedarsani2013bayesian, lyzinski2018information, cullina2019partial}, a pair of correlated  ER graphs $(\tilde{g}^1,\tilde{g}^2)$ are considered. The objective is to design a matching strategy which takes the labeled graph $\tilde{g}^1$, and unlabeled graph $g^2$
as input, and outputs the reconstruction $\hat{\sigma}^2$ of the labeling function $\sigma^2$. 
In this paper, we assume that the matching strategy has access to additional side information in the form of ambiguity sets containing a set of candidate labels for each vertex in $g^2$. To elaborate, the strategy has access to 
a collection of ambiguity sets $\mathcal{L}_s, s\in [1,n]$, where $\mathcal{L}_s\subseteq [1,n]$, such that the label $\sigma^2(s)$ is in the ambiguity set $\mathcal{L}_s$, for each $s\in [1,n]$. 
This is a generalization of the seeded graph matching model \cite{kazemi2015growing,lyzinski2014seeded,seed3,seed4,cullina2016improved, Asilomar}, where it is assumed that the correct labeling for a subset of vertices in $g^2$ are provided beforehand. 
The matching strategy is said to succeed if the fraction of the correctly labeled vertices  $\frac{1}{n}\Big|\{s\in [1,n]: \hat{\sigma}^2(s)=\sigma^2(s)\}\Big|$ approaches 1 as $n\to \infty$.

In this work, we consider stochastically generated ambiguity sets. We use bold calligraphic typeset, e.g.  $\mathbfcal{L}_s, s\in [1,n]$, to denote random ambiguity sets, and the calligraphic typeset, e.g. $\mathcal{L}_s, s\in [1,n]$, to denote their realizations. 
Let $B_{s,i}$ be the indicator that the label $i$ is in the ambiguity set of vertex $v_s$, i.e. $B_{s,i}\triangleq \mathbbm{1}(i\in \mathbfcal{L}_s)$. Note that each ambiguity set $\mathbfcal{L}_s$ must contain the correct label $\sigma^2(s)$ of $v_s$. 
Alternatively, we must have $B_{s,i_s}= 1$, where $i_s=\sigma^2(s)$.
In the most general scenario,  the $n\times n$ binary matrix $\mathbf{B}= [B_{s,i}]_{s,i\in [1,n]}$ is generated randomly based on a joint distribution $P_{\mathbf{B}}$, with the condition that $P(B_{s,\sigma^2(s)}=1)=1, s\in [1,n]$.
\begin{Definition}[\textbf{Random Ambiguity Sets}] 
\label{Def:RAS}
Consider the pair $(n, P_{\mathbf{B}})$, where $n\in \mathbb{N}$, and $P_{\mathbf{B}}$ is a probability distribution on binary $n\times n$ matrices
such that   $P_{B_{s, i_s}}(1)=1$ for $i_s=\sigma^2(s)$. A collection of ambiguity sets $\mathbfcal{L}_s, s\in [1,n]$ are generated based on parameters $(n, P_{\mathbf{B}})$ as follows: 
\begin{align*}
    P(\mathcal{L}_1,\mathcal{L}_2,\cdots,\mathcal{L}_n)= P_{\mathbf{B}}\left(B_{s,i}= \mathbbm{1}\left(i\in \mathcal{L}_s\right), i,s\in [1,n]\right) ,
    \end{align*}
    where $\mathcal{L}_1,\mathcal{L}_2,\cdots,\mathcal{L}_n \subseteq [1,n]$. 
\end{Definition}
For a given collection of ambiguity sets $\mathcal{L}_1,\mathcal{L}_2,\cdots,\mathcal{L}_n$,
we often write $P_{\mathbf{B}}(\mathbf{b})$ instead of $P(\mathcal{L}_1,\mathcal{L}_2,\cdots,\mathcal{L}_n)$, where $\mathbf{b}=[b_{s,i}]_{i,s\in [1,n]}$ is the binary matrix of indicator variables corresponding to $\mathcal{L}_1,\mathcal{L}_2,\cdots,\mathcal{L}_n$, i.e. $b_{s,i}= \mathbbm{1}(i\in \mathcal{L}_s), s,i\in [1,n]$.
\begin{Remark}
\label{Rem:indep:amb}
We have assumed that the ambiguity sets are generated independently of the graph edges. However, this may not hold in many practical applications. Graph matching with correlated ambiguity set and edge set generation is an interesting avenue for future work.
\end{Remark}
\begin{Remark}
The ambiguity set model described in Definition \ref{Def:RAS} captures the seeded graph matching problem as a special case. This is investigated in Section \ref{Sec:Seed}. 
\end{Remark}
The following formally defines a strategy for graph matching in the presence of ambiguity set side-information.
\begin{Definition}[\bf Matching Strategy]
\label{Def:MSS}
Consider a family of pairs of correlated ER graphs $\tilde{g}^1_n=(g^1_n, \sigma^1_n)$  and $\tilde{g}^2_n=(g^2_n, \sigma^2_n), n \in \mathbb{N}$, parameterized by the tuple $(n,\ell, \sigma^1_n, \sigma^2_n, P_{n,X,Y}), n\in \mathbb{N}$. Furthermore, consider  a family of collections of ambiguity sets $\mathbfcal{L}_{n,s}, s\in [1,n]$ generated according to $P_{n,\mathbf{B}}(\cdot)$, where $P_{n,\mathbf{B}}, n \in \mathbb{N}$ is a family of distributions defined on binary $n\times n$ martrices.
A matching strategy is a sequence of functions $f_n: (\tilde{g}^1,g^2, (\mathbfcal{L}_{n,s})_{s\in [1,n]})\mapsto \hat{\sigma}^2, n \in \mathbb{N}$. Let $I_n$ be distributed uniformly over $[n]$. The matching strategy is said to succeed if $P\left(\sigma^2(I_n)=\hat{\sigma}^2(I_n)\right)\to 1$ as $n\to\infty$. 
 \end{Definition}

Our objective is to investigate the necessary and sufficient conditions on sequences of edge statistics $P_{n,X,Y}, n \in \mathbb{N}$ and ambiguity set statistics $P_{n,\mathbf{B}}, n\in \mathbb{N}$ such that a successful matching strategy exists.  

\section{Permutations of Pairs of Sequences}
\label{Sec:Perm}
In \cite{shirani2020concentration}, we have proposed typicality based graph matching strategies under the seedless and seeded correlated ER model, and evaluated the success conditions. In this section, we introduce some of the results on typicality of permuted sequences which are used in our analysis in the subsequent sections. A more complete description of these tools is provided in \cite{shirani2020concentration}.

\begin{Definition}[\bf Strong Typicality \cite{csiszarbook}]
\label{Def:typ}
Let the pair of random variables $(X,Y)$ be defined on the probability space $(\mathcal{X}\times\mathcal{Y},P_{X,Y})$, where $\mathcal{X}$ and $\mathcal{Y}$ are finite alphabets. The $\epsilon$-typical set of sequences of length $n$ with respect to $P_{X,Y}$ is defined as:
\begin{align*}
&\mathcal{A}_{\epsilon}^n(X,Y)=\Big\{(x^n,y^n): 
\underline{t}(x,y)\stackrel{\cdot}{=}P_{X,Y}(x,y)\pm \epsilon, 
\\& \qquad \forall (x,y)\in \mathcal{X}\times\mathcal{Y}  ~\&~  \underline{t}(x,y)=0 \text{ if } P_{X,Y}(x,y)=0\Big\},
\end{align*}
where $\underline{t}$ is the joint type of $(x^n,y^n)$,  $\epsilon>0$, and $n\in \mathbb{N}$.  
\end{Definition} 

\begin{Theorem}[\textbf{Typicality of Permutation of Correlated Sequences \cite{shirani2020concentration}}]
Let $\epsilon\in [0, \frac{1}{2}\min_{x,y\in \mathcal{X}\times \mathcal{Y}}P_{X,Y}(x,y)]$, and consider $(X^n,Y^n)$ a pair of i.i.d sequences defined on finite alphabets $\mathcal{X}$ and $\mathcal{Y}$, respectively. Let $\pi$ be a permutation of vectors of length $n$, with $m\in [n]$ fixed points. Then,
\begin{align}
    &P((X^n,\pi(Y^n))\in \mathcal{A}_{\epsilon}^n(X,Y))\leq 2^{-n (E_{\alpha}-{\zeta_n}-\delta_\epsilon)},
\\& E_{\alpha}=\min_{\underline{t}'_X\in
\mathcal{P}}\frac{1}{2}\Big(\overline{\alpha}D(\underline{t}'_X||P_X)+\alpha D(\underline{t}''_{X}||P_X)+
\nonumber
\\&\qquad 
D(P_{X,Y}|| \overline{\alpha} P_XP_{Y''}+\alpha P_{X,Y})\Big),
\label{eq:perm_bound_1}
\end{align}
where $\alpha\triangleq\frac{m}{n}$, $\overline{\alpha}= 1-\alpha$, $\mathcal{P}\triangleq \{ \underline{t}_X\in \mathcal{P}_X|\forall x\in \mathcal{X}: \underline{t}_X(x)\in \frac{1}{\overline{\alpha}}[P_X(x)-\alpha, P_X(x)]\}$, $\mathcal{P}_X$ is the probability simplex on the alphabet $\mathcal{X}$, $D(\cdot||\cdot)$ is the Kullback-Leibler divergence,  $\underline{t}''_X\triangleq \frac{1}{\alpha}(P_X-\overline{\alpha} \underline{t}'_X)$, $P_{Y''}(\cdot)\triangleq \sum_{x\in \mathcal{X}} \underline{t}'_X(x) P_{Y|X}(\cdot|x)$,  $\zeta_{n}\triangleq \frac{3}{2}|\mathcal{X}|^2|\mathcal{Y}|\frac{\log{(n+1)}}{n}+ 6|\mathcal{X}||\mathcal{Y}|\frac{\log{(n+1)}}{n}$, and \[\delta_{\epsilon}\triangleq  \epsilon|\mathcal{X}||\mathcal{Y}|
  \big|\max_{x,y :P_{X,Y}(x,y)\neq 0}\!\!\!\! \log{\frac{P_{X,Y}(x,y)}{
   \alpha P_{X,Y}(x,y)+\overline{\alpha}P_X(x)P_Y(y)
  }}\big|+O(\epsilon).\] 
\label{th:1:improved}
\end{Theorem}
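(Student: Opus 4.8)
The plan is a method-of-types argument that exploits the decomposition of $[n]$ induced by $\pi$ into its set $S$ of $m$ fixed points and the complementary set $\bar S$ of size $n-m$, on which $\pi$ acts without fixed points. Since $(X^n,Y^n)$ is i.i.d.\ $P_{X,Y}$, for $i\in S$ we have $(X_i,\pi(Y)_i)=(X_i,Y_i)\sim P_{X,Y}$, while for $i\in\bar S$ the coordinate $\pi(Y)_i=Y_{\pi^{-1}(i)}$ with $\pi^{-1}(i)\neq i$ is marginally independent of $X_i$, so $(X_i,\pi(Y)_i)\sim P_X\times P_Y$. The joint type of $(X^n,\pi(Y^n))$ is the convex combination $\alpha\,\underline t_S+\overline\alpha\,\underline t_{\bar S}$ of its restrictions to $S$ and $\bar S$ with $\alpha=m/n$, and membership in $\mathcal A_\epsilon^n(X,Y)$ forces this combination to be within $\epsilon$ of $P_{X,Y}$ entrywise. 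First I would reduce the probability of the whole typical set to the probability of a single dominating joint type: there are only polynomially many types, the $\tfrac{\log(n+1)}{n}$ terms in $\zeta_n$ are the usual type-counting overhead, and the $\epsilon$-width of $\mathcal A_\epsilon^n$ is linearized through the logarithmic factor appearing in $\delta_\epsilon$; both are carried as additive slack in the exponent.

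Next I would introduce the auxiliary quantity $\underline t'_X$, the $X$-marginal type on $\bar S$. Requiring the overall $X$-type to be $\approx P_X$ pins the $X$-type on $S$ to $\underline t''_X=\tfrac1\alpha(P_X-\overline\alpha\,\underline t'_X)$, and the admissibility constraint that $\underline t''_X$ be a genuine distribution is exactly the set $\mathcal P$; summing over the polynomially many admissible values of $\underline t'_X$ produces the outer minimization $\min_{\underline t'_X\in\mathcal P}$. For fixed $\underline t'_X$, the exponent splits into three pieces via a chain-rule rearrangement of relative entropy: the Sanov cost $\overline\alpha D(\underline t'_X\|P_X)$ of the $X$-type deviating from $P_X$ on $\bar S$, where the $X$-coordinates are i.i.d.\ $P_X$; the analogous cost $\alpha D(\underline t''_X\|P_X)$ on $S$; and, after separating out these $X$-type deviations, the cost of the full joint type being $P_{X,Y}$ measured against the per-coordinate reference mixture $\overline\alpha\,P_XP_{Y''}+\alpha\,P_{X,Y}$. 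This reference reflects the two regimes — on the fixed fraction $\alpha$ the pair is genuinely $P_{X,Y}$-distributed, while on the non-fixed fraction $\overline\alpha$ it is a product whose induced $Y$-marginal is $P_{Y''}=\sum_x\underline t'_X(x)P_{Y|X}(\cdot|x)$ — and Sanov against it yields $D(P_{X,Y}\|\overline\alpha P_XP_{Y''}+\alpha P_{X,Y})$.

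The principal obstacle is the non-fixed block, because $\pi|_{\bar S}$ is a genuine derangement and the coordinates $(X_i,\pi(Y)_i)_{i\in\bar S}$ are not independent: they are coupled along the cycles of $\pi$, a cycle $i_1\to\cdots\to i_\ell$ producing observed pairs that link $X_{i_{k+1}}$ with $Y_{i_k}$ cyclically, so that fixing all observed values forces $(X_{i_k},Y_{i_k})=(a_k,b_{k+1})$ and contributes $\prod_k P_{X,Y}(a_k,b_{k+1})$. The key claim is that, among all permutations with $m$ fixed points, the typicality probability is upper bounded by that of the involution whose non-fixed part is $(n-m)/2$ disjoint transpositions: a single transposition $(i,j)$ yields the pair-of-pairs $((X_i,Y_j),(X_j,Y_i))$ with law $P_{X,Y}(a,d)P_{X,Y}(c,b)$, so the $n-m$ non-fixed coordinates are governed by only $(n-m)/2$ independent swaps. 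This halving is precisely the origin of the factor $\tfrac12$ multiplying the bracket in $E_\alpha$; one can sanity-check it on the noiseless case $X=Y$ with uniform binary marginals, where the involution gives probability $2^{-(n-m)/2}$, i.e.\ rate $\tfrac12 I(X;Y)$, whereas a single long cycle gives the strictly smaller $2^{-(n-m)}$. Making this rigorous — bounding the cross-pairing large-deviation rate for every cycle structure and verifying that the involution dominates, so that the $\tfrac12$ holds uniformly over all $\pi$ with $m$ fixed points — is the heart of the argument. Combining this combinatorial estimate with the two $X$-type Sanov costs and the polynomial type-counting slack then yields the stated exponent $E_\alpha$ together with the corrections $\zeta_n$ and $\delta_\epsilon$.
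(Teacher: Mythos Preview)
The paper does not contain a proof of this theorem: it is quoted verbatim from \cite{shirani2020concentration} as an imported tool (see Section~\ref{Sec:Perm}), and none of the appendices address it. There is therefore nothing in the present paper to compare your proposal against.

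On its own merits, your sketch captures the expected architecture of a method-of-types argument: the split into the fixed-point block $S$ and the deranged block $\bar S$, the conditioning on the $X$-type $\underline t'_X$ on $\bar S$ (which forces $\underline t''_X$ on $S$ and explains the constraint set $\mathcal P$), the identification of $P_{Y''}$ as the induced $Y$-marginal on $\bar S$ under that conditioning, and the reading of $\zeta_n$ and $\delta_\epsilon$ as type-counting and $\epsilon$-slack corrections are all sound. Your explanation of the factor $\tfrac12$ as stemming from the involutive extremal --- transpositions pairing up the deranged coordinates so that the effective number of independent constraints is halved --- is the right intuition, and you correctly flag the reduction to this case as the crux. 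What is missing is any indication of \emph{how} to establish that reduction: the cycle-by-cycle large-deviations rate depends on the full cycle structure, and showing that replacing a cycle of length $\ell\geq 3$ by transpositions can only increase the typicality probability (equivalently, that the per-coordinate rate is minimized at cycle length $2$) requires a genuine convexity or rearrangement argument, not just the $X=Y$ sanity check. Without that step, the $\tfrac12$ remains asserted rather than derived.
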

\section{Sufficient Conditions for Successful Matching}
\label{Sec:Suf}
In this section, we consider several stochastic models on the ambiguity set distribution $P_{\mathbf{B}}$, and derive  sufficient conditions on the edge statistics $P_{X,Y}$ and the ambiguity set statistics $P_{\mathbf{B}}$ under which a successful matching strategy exists. In particular, we consider the typicality matching strategy, described in the following, and evaluate its success conditions.

\subsection{Typicality Matching Strategy}
 Given a correlated pair of ER graphs  $(\tilde{g}^1,{g}^2)$ with joint edge distribution $P_{X,Y}$, where only the labeling for $\tilde{g}^1$ is given, and the collection of ambiguity sets $\mathcal{L}_s, s\in [1,n]$ generated according to $P_{\mathbf{B}}$, the TM strategy operates as follows. It finds a labeling $\hat{\sigma}^2$ consistent with the ambiguity sets, for which the pair of UT's $U^{1}_{{\sigma}^1}$ and $U^{2}_{\hat{\sigma}^2}$ are jointly typical with respect to $P_{X,Y}$  when viewed as vectors of length $\frac{n(n-1)}{2}$. The strategy  fails if no such labeling exists. Alternatively, it finds an element  $\hat{\sigma}^2$ in the set:
\begin{align}
 \widehat{\Sigma}=\{\hat{\sigma}^2| (U^{1}_{{\sigma}^1},U^{2}_{\hat{\sigma}^2})\in \mathcal{A}_{\epsilon}^{\frac{n(n-1)}{2}}\!\!(X,Y), \hat{\sigma}^2(s)\in \mathcal{L}_S, s\in [1,n]\},
 \label{eq:sigma}
\end{align}
where $\epsilon=\omega(\frac{1}{n})$. Note that the set $ \widehat{\Sigma}$ may have more than one element. In that case, the strategy chooses one of these elements randomly and uniformly as the output. We will show that under certain conditions on the joint edge distribution and ambiguity set statistics, all of the elements of $ \widehat{\Sigma}$ satisfy the criteria for successful matching given in Definition \ref{Def:MSS}. In other words, for all of the elements of $ \widehat{\Sigma}$  the probability of incorrect labeling for any given vertex is arbitrarily small for large $n$. Formally, the TM strategy is a sequence of functions $f_n: (\tilde{g}_n^1,g_n^2)\times (\mathcal{L}_s)_{s\in [1,n]}\to (\tilde{g}_n^1,\hat{g}_n^2), n\in \mathbb{N}$, where for any given $n\in \mathbb{N}$, the labeling  $\hat{\sigma}_n^2$ of $\hat{g}_n^2$ is chosen randomly and uniformly from the set $\widehat{\Sigma}$ defined previously.
\subsection{Seeded Graph Matching}
\label{Sec:Seed}
 In this scenario, it is assumed that for a given $\gamma=\frac{k}{n}, k\in \{0,1,\cdots, n\}$, the correct label of a randomly chosen subset of $\gamma n$ vertices in $g^2$ are known prior to start of the matching process. The scenario can be viewed as a special case of graph matching with ambiguity set side-information described in Section \ref{Sec:Prilim}. To elaborate, let $\mathcal{S}=\{v_{S_1},v_{S_2},\cdots, v_{S_{\gamma n}}\}$ be the seed vertices chosen randomly and uniformly from $\mathcal{V}$.  Then, the seeded graph matching scenario can be posed as follows:
\begin{align*}
    P_{\mathbf{B}}(\mathbf{b})= \sum_{s^{\gamma n}} P_{S^{\gamma n}}(s^{\gamma n})   P_{\mathbf{B}}(\mathbf{b}|s^{\gamma n}),
\end{align*}
where $P_{S^{\gamma n}}(s_1,s_2,\cdots, s_{\gamma n})= \frac{1}{{n\choose \gamma n}},  s^{\gamma n} \in [1,n]^{\gamma n}, s_i\neq s_j$ is the joint distribution imposed on the indices $S_1,S_2,\cdots, S_{\gamma n}$ of the seed vertices in $\mathcal{S}$; and given the seed vertices $ \mathcal{S}$,  we have
\begin{align*}
\mathbfcal{L}_s=
    \begin{cases}
    \{\sigma^2(s)\},&\qquad \text{ if }  v_s\in \mathcal{S}    \\
    \{1,2,\cdots,n\}, &\qquad  \text{otherwise}
    \end{cases}.
\end{align*}
Alternatively,
the ambiguity set  $\mathbf{B}$ is deterministically given by:
\begin{align*}
    P_{\mathbf{B}}(\mathbf{b}|\mathcal{S})=&\prod_{s\in \{S_1,S_2,\cdots,S_{\gamma n}\}} \mathbbm{1}(b_{s,\sigma^2(s)}=1, b_{s,i}=0, i\neq \sigma^2(s))\times
    \\&\prod_{s\notin \{S_1,S_2,\cdots,S_{\gamma n}\}}\prod_{i\in [1,n]} \mathbbm{1}(b_{s,i}=1).
\end{align*}
The following theorem provides sufficient conditions for successful graph matching in the seeded scenario. 
\begin{Theorem}
\label{th:21}
Let $\gamma_n \in [0,1], n\in \mathbb{N}$. Given the sequence of seed sizes $n \gamma_n, n\in \mathbb{N}$, and sequence of edge distributions $P^{(n)}_{X,Y}$, the TM strategy succeeds if:
\begin{align}
2(1-\alpha)\frac{\log{n}}{n-1}\leq 
E_{\alpha^2},  \gamma_n\leq \alpha\leq \alpha_n,
\label{eq:th:1}
\end{align}
and $\max_{(x,y): P^{(n)}_{X,Y}(x,y)\neq 0}|\log{\frac{P^{(n)}_{X}(x)P^{(n)}_{Y}(y)}{P^{(n)}_{X,Y}(x,y)}}|^+= o(\log{n})$, 
where $\alpha_n\to 1$ as $n\to \infty$, and $E_{\alpha^2}$ is defined in Theorem \ref{th:1:improved}.
\end{Theorem}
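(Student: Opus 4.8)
The plan is to bound the probability that the TM strategy fails by a union bound over all ``bad'' labelings $\hat\sigma^2$ that are consistent with the ambiguity sets but disagree with the true labeling $\sigma^2$ on a non-vanishing fraction of vertices. First I would fix an arbitrary vertex index and, following Definition~\ref{Def:MSS}, express the failure event as the existence of some $\hat\sigma^2\in\widehat\Sigma$ with $\hat\sigma^2(I_n)\neq\sigma^2(I_n)$. Since $\sigma^2$ itself lies in $\widehat\Sigma$ with high probability (the true pair of UT's is jointly typical by the weak law of large numbers, using $\epsilon=\omega(\tfrac1n)$), the issue is the presence of competing \emph{spurious} labelings. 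I would group these competitors by the number $m$ of fixed points they share with the true $\sigma^2$, equivalently by $\alpha=\tfrac{m}{n}$. Because the seeds pin down $\gamma_n n$ coordinates exactly, any admissible $\hat\sigma^2$ must agree with $\sigma^2$ on at least the seed vertices, which forces $\alpha\geq\gamma_n$; this is exactly the lower constraint $\gamma_n\le\alpha$ in \eqref{eq:th:1}.

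The key step is to control, for each $\alpha$, the probability that \emph{some} permutation with $m=\alpha n$ fixed points yields a jointly typical pair of UT's. Viewing the two UT's as correlated i.i.d.\ sequences of length $N=\tfrac{n(n-1)}{2}$, a permutation $\sigma$ of the $n$ vertices induces a permutation $\pi$ on the $N$ edge-coordinates whose number of fixed points is $\binom{m}{2}$; hence the effective fraction of fixed \emph{edges} is $\tfrac{\binom{m}{2}}{\binom{n}{2}}\approx\alpha^2$, which is why the exponent appearing in \eqref{eq:th:1} is $E_{\alpha^2}$ rather than $E_\alpha$. For a single such permutation, Theorem~\ref{th:1:improved} gives
\begin{align}
P\big((U^1_{\sigma^1},U^2_{\hat\sigma^2})\in\mathcal{A}_\epsilon^{N}(X,Y)\big)\leq 2^{-N(E_{\alpha^2}-\zeta_N-\delta_\epsilon)}.
\label{eq:pf:single}
\end{align}
I would then multiply by the number of admissible labelings with a given agreement level. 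The count of vertex-permutations moving $(1-\alpha)n$ vertices is at most $\binom{n}{(1-\alpha)n}\big((1-\alpha)n\big)!\le n^{(1-\alpha)n}=2^{(1-\alpha)n\log n}$, and summing over the admissible range of $\alpha$ contributes only a polynomial factor in $n$.

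Combining the counting bound with \eqref{eq:pf:single}, the total failure contribution from competitors at agreement level $\alpha$ is at most $2^{(1-\alpha)n\log n}\,2^{-N(E_{\alpha^2}-\zeta_N-\delta_\epsilon)}$, and since $N=\tfrac{n(n-1)}{2}$ this vanishes precisely when
\begin{align}
(1-\alpha)n\log n \;<\; \frac{n(n-1)}{2}\,\big(E_{\alpha^2}-\zeta_N-\delta_\epsilon\big),
\label{eq:pf:balance}
\end{align}
which after dividing by $\tfrac{n(n-1)}{2}$ rearranges to the stated condition $2(1-\alpha)\tfrac{\log n}{n-1}\le E_{\alpha^2}$. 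The technical side-condition $\max_{(x,y)}\big|\log\tfrac{P_X(x)P_Y(y)}{P_{X,Y}(x,y)}\big|^+=o(\log n)$ is what guarantees that the error term $\delta_\epsilon$ (which scales with this log-ratio) is negligible relative to $\log n$, so it does not corrupt the exponent in \eqref{eq:pf:balance}.

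The main obstacle I anticipate is the dependence between the typicality events of different candidate labelings: the induced edge-permutations overlap heavily, so \eqref{eq:pf:single} cannot simply be multiplied across all competitors as if independent. The clean way around this is to avoid independence entirely and rely only on the union bound, which is why the argument reduces to the single-permutation tail bound of Theorem~\ref{th:1:improved} together with a counting estimate; the delicate part is verifying that Theorem~\ref{th:1:improved} is being applied with the correct fixed-point fraction $\alpha^2$ at the edge level and that the counting factor $2^{(1-\alpha)n\log n}$ together with the polynomial number of distinct $\alpha$ values is dominated by the exponential gain $2^{-N E_{\alpha^2}}$ uniformly over $\gamma_n\le\alpha\le\alpha_n$. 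The upper cutoff $\alpha\le\alpha_n\to1$ handles the near-identity permutations separately, where $E_{\alpha^2}\to0$ and one instead argues that such labelings disturb only a vanishing fraction of vertices and therefore do not violate the success criterion of Definition~\ref{Def:MSS}.
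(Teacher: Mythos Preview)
Your proposal is correct and follows essentially the same approach as the paper's proof: a union bound over candidate labelings grouped by their number $m=\alpha n$ of fixed points with the true $\sigma^2$, the seed constraint forcing $\alpha\ge\gamma_n$, the observation that the induced edge-permutation has $\binom{m}{2}$ fixed points (yielding the exponent $E_{\alpha^2}$), the single-permutation bound from Theorem~\ref{th:1:improved}, and the counting estimate $\binom{n}{(1-\alpha)n}((1-\alpha)n)!\le n^{(1-\alpha)n}$. The paper carries out exactly these steps, writing the count via derangements as $\binom{n}{i}\cdot !(n-i)$ and making explicit that $\zeta_{N}=O(\tfrac{\log n}{n^2})$ and $\delta_\epsilon=o(\tfrac{\log n}{n})$ under the stated side-condition, which matches your handling of the error terms.
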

\begin{proof}
Please see \ref{Ap:th:21}. 
\end{proof}
\begin{Remark}
A number of prior works consider a variation of the seeded scenario, where instead of the seed set $\mathcal{S}$, we are given a labeling $\tilde{\sigma}^2$ for which the labels of $\gamma n$ vertices are correct, but it is not known which vertices are correctly matched by $\tilde{\sigma}^2$ (e.g. witness based algorithms in {\cite{lubars2018improving})}. It follows from the proof of Theorem \ref{th:21} shows that, given the conditions in \eqref{eq:th:1}, the TM strategy is successful  under this scenario as well. 
\end{Remark}
\subsection{Equiprobable Ambiguity Sets}
\label{Sub:Short}
In this scenario, it is assumed that for each vertex $v_s, s\in [1,n]$, an index $i\neq \sigma^2(s)$ is in $\mathbfcal{L}_s$ with probability $p\in [0,1]$ independently of all other ambiguity set elements: 
\begin{align*}
    P_{\mathbf{B}}(\mathbf{b})= \prod_{s\in [1,n]} \mathbbm{1}(b_{s,\sigma^2(s)}=1) p^{| \mathcal{L}_s|-1} (1-p)^{n-|\mathcal{L}_s|}.
\end{align*}
Note that in this case, for large $n$, each ambiguity set has roughly $np$ elements chosen independently of all each other and independently of other ambiguity sets. 

\begin{Theorem}
\label{th:22}
 Given sequences of edge distributions $P^{(n)}_{X,Y},n \in \mathbb{N}$, and $p_n\in [0,1],n \in \mathbb{N}$, the TM strategy succeeds if:
\begin{align}
2(1-\alpha)\frac{\log{n}}{n-1}\leq 
E_{\alpha^2}-2(1-\alpha) \frac{\log{p_n}}{n},  0\leq \alpha\leq \alpha_n,
\label{eq:th:2}
\end{align}
and $\max_{(x,y): P^{(n)}_{X,Y}(x,y)\neq 0}|\log{\frac{P^{(n)}_{X}(x)P^{(n)}_{Y}(y)}{P^{(n)}_{X,Y}(x,y)}}|^+= o(\log{n})$, 
where $\alpha_n\to 1$ as $n\to \infty$, and $E_{\alpha^2}$ is defined in Theorem \ref{th:1:improved}.
\end{Theorem}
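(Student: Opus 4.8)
The plan is to bound the per-vertex error probability by a first-moment (union bound) argument over the permutations that the TM strategy could erroneously output, in direct parallel with the proof of Theorem \ref{th:21}. Writing $\pi \triangleq \widehat{\sigma}^2 \circ (\sigma^2)^{-1}$, any output $\widehat{\sigma}^2 \in \widehat{\Sigma}$ that mislabels vertex $I_n$ corresponds to a permutation $\pi \neq \mathrm{id}$ that displaces the label of $I_n$. I would group these candidate labelings by their number of fixed vertices $m = \alpha n$ and control the contribution of each group to the expected number of spurious typical-and-consistent labelings.

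For a fixed $\alpha$, three ingredients are needed. First, the number of vertex permutations with $\alpha n$ fixed points is at most $\frac{n!}{(\alpha n)!} \leq n^{(1-\alpha)n} = 2^{(1-\alpha)n\log n}$. Second, a vertex permutation fixing $\alpha n$ vertices fixes $\binom{\alpha n}{2} \approx \alpha^2\binom{n}{2}$ of the $\binom{n}{2}$ edges (the $O(n)$ correction from transpositions is lower order), so the induced permutation on the UT sequence of length $N = \frac{n(n-1)}{2}$ has edge-fixed-fraction $\approx \alpha^2$; applying Theorem \ref{th:1:improved} at sequence length $N$ bounds the probability that $(U^1_{\sigma^1}, U^2_{\widehat{\sigma}^2})$ is jointly $\epsilon$-typical by $2^{-N(E_{\alpha^2}-\zeta_N-\delta_\epsilon)}$. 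Third, and this is the only structural departure from the seeded case, I would compute the probability that a given spurious $\widehat{\sigma}^2$ is consistent with the ambiguity sets: its $(1-\alpha)n$ displaced vertices each receive an incorrect label, which lies in the corresponding ambiguity set independently with probability $p_n$ under the equiprobable model; since $\widehat{\sigma}^2$ is a bijection these events involve distinct indicators $B_{s,\widehat{\sigma}^2(s)}$, so the consistency probability is exactly $p_n^{(1-\alpha)n}$. By Remark \ref{Rem:indep:amb} this event is independent of the graph edges, so the two probabilities multiply.

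Multiplying the three factors and taking logarithms, the expected number of spurious labelings with fixed-fraction $\alpha$ is at most
\begin{align*}
2^{(1-\alpha)n\log n \;+\; (1-\alpha)n\log p_n \;-\; N(E_{\alpha^2}-\zeta_N-\delta_\epsilon)}.
\end{align*}
Dividing the exponent by $N = \frac{n(n-1)}{2}$ and absorbing $\zeta_N,\delta_\epsilon$ — which are $o(1)$, using the assumed bound $\max|\log\frac{P^{(n)}_{X}P^{(n)}_{Y}}{P^{(n)}_{X,Y}}|^+ = o(\log n)$ to control $\delta_\epsilon$ — shows this factor vanishes precisely when $2(1-\alpha)\frac{\log n}{n-1} + 2(1-\alpha)\frac{\log p_n}{n} < E_{\alpha^2}$, i.e.\ condition \eqref{eq:th:2}; note that $\log p_n \leq 0$ makes the extra term a genuine relaxation, reflecting that smaller ambiguity sets supply more side-information. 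Summing the resulting bound over the $O(n)$ values of $\alpha$ keeps the total expected count vanishing, so Markov's inequality rules out any surviving spurious labeling with high probability.

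The point requiring the most care is the range of $\alpha$. For $\alpha$ near $1$ the exponent $E_{\alpha^2}$ degenerates and the first-moment bound is too weak; however, a surviving permutation with $\alpha \geq \alpha_n$ displaces at most $(1-\alpha_n)n$ vertices, so it contributes at most $1-\alpha_n$ to the uniform per-vertex error probability $P(\sigma^2(I_n)\neq\widehat{\sigma}^2(I_n))$. It therefore suffices to exclude spurious labelings with $\alpha \leq \alpha_n$, exactly the range in the hypothesis, and one completes the argument by choosing $\alpha_n \to 1$ slowly enough that both the first-moment sum and the residual $1-\alpha_n$ vanish. The main obstacle is making this split-and-bound over $\alpha$ \emph{uniform}: one must verify that $E_{\alpha^2}$ stays bounded below relative to the $\frac{\log n}{n}$ and $\frac{\log p_n}{n}$ terms across $\alpha \in [0,\alpha_n]$, which is where the $o(\log n)$ assumption on the log-likelihood ratio and the explicit forms of $\zeta_N,\delta_\epsilon$ from Theorem \ref{th:1:improved} enter.
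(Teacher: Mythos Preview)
Your proposal is correct and follows essentially the same approach as the paper's proof: a union bound over candidate labelings grouped by number of fixed vertices, with the three multiplicative factors (count $\leq n^{(1-\alpha)n}$, consistency probability $p_n^{(1-\alpha)n}$, typicality bound $2^{-N(E_{\alpha^2}-\zeta_N-\delta_\epsilon)}$ from Theorem~\ref{th:1:improved}) combined and compared against condition~\eqref{eq:th:2}. Your explicit discussion of why the regime $\alpha\geq\alpha_n$ is harmless for the per-vertex criterion is a slightly more detailed version of what the paper encodes by simply restricting the bad event $\mathcal{E}$ to labelings with at least $n(1-\alpha_n)$ mismatches.
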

\begin{proof}
Please see \ref{Ap:th:22}. 
\end{proof}
\begin{Remark}
Let $p_n=n^{-\alpha}, \alpha>0$, so that each ambiguity set has roughly $n^{1-\alpha}$ elements. Then, if $\alpha=0$, the ambiguity sets are trivially equal to $[1,n]$, and condition \eqref{eq:th:2}  recovers the one in \cite{shirani2020concentration} for matching graphs without ambiguity set side-information. On the other hand, if $\alpha> 1$, then, with high probability each ambiguity set contains a single element, the correct labeling, and from  \eqref{eq:th:2} we see that the matching strategy is always successful regardless of edge statistics. 
\end{Remark}
\subsection{Randomly Generated Ambiguity Set Distribution}
A generalization of the previous scenario is as follows:
\begin{align*}
    &P_{\mathbf{B}}(\mathbf{b})= \int_{p^n\in [0,1]^n}
    \prod_{i\in [1,n]} f_{P}(p_i)\prod_{s\in [1,n]} \mathbbm{1}(b_{s,\sigma^2(s)}=1)\times 
    \\&\prod_{i\in \mathcal{L}_s\backslash \{\sigma^2(s)\}}
    p_i
    \prod_{i\notin \mathcal{L}_s}(1-p_i)dp^n,
\end{align*}
where $f_P(\cdot)$ is an arbitrary probability distribution function (pdf) over the unit interval $[0,1]$. In other words, for each vertex $v_s$ and index $i\neq \sigma^2(s)$, the probability $P(i\in \mathbfcal{L}_s)=P_i$ is a random value in the unit interval chosen according to  $f_P(\cdot)$ independent of all $P_j, j\neq i$. Note that this allows for a specific form of correlation among elements of $\mathbf{B}$ in contrast with the equiprobable case considered in Section \ref{Sub:Short}. More precisely, under the model considered here, for each $i\in [1,n]$, the variables $B_{s,i}, s\in [1,n]$ may be correlated with each other.

\begin{Theorem}
\label{th:24}
 Given sequences of edge distributions $P^{(n)}_{X,Y}$, and probability distribution functions $f_{n,P}$, where $P$ is  a random variable defined on the unit interval, the TM strategy succeeds if:
\begin{align}
2(1-\alpha)\frac{\log{n}}{n-1}\leq 
E_{\alpha^2}-2(1-\alpha) \frac{\log{\mathbb{E}_{n,P}(P)}}{n},  0\leq \alpha\leq \alpha_n,
\label{eq:th:4}
\end{align}
and $\max_{(x,y): P^{(n)}_{X,Y}(x,y)\neq 0}|\log{\frac{P^{(n)}_{X}(x)P^{(n)}_{Y}(y)}{P^{(n)}_{X,Y}(x,y)}}|^+= o(\log{n})$, 
where $\alpha_n\to 1$ as $n\to \infty$, $E_{\alpha^2}$ is defined in Theorem \ref{th:1:improved}, and $\mathbb{E}_{n,P}(P)$ is the expected value of $P$  with respect to $f_{n,P}$ .
\end{Theorem}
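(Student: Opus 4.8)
The plan is to mirror the proof of Theorem \ref{th:22}, modifying only the step in which the ambiguity-set consistency probability is computed. As in the equiprobable case, I would analyze the typicality matching strategy via a first-moment bound on the probability that the set $\widehat{\Sigma}$ in \eqref{eq:sigma} contains a labeling that disagrees with the true $\sigma^2$ on a non-vanishing fraction of vertices. The correct labeling $\sigma^2$ lies in $\widehat{\Sigma}$ with high probability by the AEP, so the strategy does not fail for lack of a candidate; and if every surviving candidate agrees with $\sigma^2$ on all but a vanishing fraction of vertices, success in the sense of Definition \ref{Def:MSS} follows because $I_n$ is uniform. I would index candidate labelings by the permutation $\pi$ relating them to $\sigma^2$ and group the union bound by $m=\alpha n$, the number of fixed points of $\pi$.

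Because the ambiguity sets are generated independently of the graph edges (Remark \ref{Rem:indep:amb}), the probability that a fixed $\pi$ places a labeling into $\widehat{\Sigma}$ factors into a typicality term and a consistency term. For the typicality term I would invoke Theorem \ref{th:1:improved}: a vertex permutation with $\alpha n$ fixed points induces a permutation on the $\binom{n}{2}$ entries of the upper triangle whose fraction of fixed edges is $\alpha^2$ (edges joining two fixed vertices), giving the bound $2^{-\binom{n}{2}(E_{\alpha^2}-\zeta_n-\delta_\epsilon)}$. Together with the combinatorial factor $2^{(1-\alpha)n\log n}$ counting permutations with $\alpha n$ fixed points, this part is identical to Theorem \ref{th:22}.

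The only new computation is the consistency term $P\big(\forall s:\ (\pi\sigma^2)(s)\in \mathbfcal{L}_s\big)$. The apparent obstacle, and the one point I expect to require care, is that, unlike the equiprobable model, the indicators $B_{s,i}$ are now correlated: for each label $i$ the shared parameter $P_i$ couples $B_{s,i}$ across all $s$. The observation that dissolves this obstacle is that $\pi\sigma^2$ is a \emph{bijection}, so each label $i$ is the assigned (and, at a non-fixed point, incorrect) label of at most one vertex. Hence the $(1-\alpha)n$ events $\{(\pi\sigma^2)(s)\in \mathbfcal{L}_s\}$ at non-fixed vertices involve pairwise distinct labels, and therefore pairwise distinct, \emph{independent} parameters $P_i$. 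By the tower property each such event has marginal probability $\mathbb{E}[P_i]=\mathbb{E}_{n,P}(P)$, and independence across distinct labels gives the consistency term \emph{exactly} $\big(\mathbb{E}_{n,P}(P)\big)^{(1-\alpha)n}$. Note that the bijection is precisely what yields $\log\mathbb{E}_{n,P}(P)$ rather than a higher-moment expression such as $\mathbb{E}_{n,P}(\log P)$: a single candidate touches each $P_i$ at most once, so the correlation in the model is irrelevant to the union-bound (first-moment) analysis, and would enter only in a second-moment treatment across several candidates.

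Substituting this consistency term for $p_n^{(1-\alpha)n}$ from Theorem \ref{th:22}, the union-bound exponent becomes
\[
\binom{n}{2}\big(E_{\alpha^2}-\zeta_n-\delta_\epsilon\big)-(1-\alpha)n\log n-(1-\alpha)n\log \mathbb{E}_{n,P}(P)+o(n^2),
\]
and requiring this to be positive for every $\alpha$ up to some $\alpha_n\to 1$ is exactly condition \eqref{eq:th:4} after dividing by $\binom{n}{2}$; the hypothesis $\max_{(x,y)}\big|\log\tfrac{P^{(n)}_{X}P^{(n)}_{Y}}{P^{(n)}_{X,Y}}\big|^{+}=o(\log n)$ controls $\delta_\epsilon$ so that it is absorbed into the lower-order terms, as in Theorem \ref{th:22}. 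Everything other than the bijection/independence argument for the consistency term transfers verbatim from the equiprobable proof.
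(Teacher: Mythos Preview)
Your proposal is correct and follows essentially the same approach as the paper's proof: both factor the union bound into a typicality term (handled via Theorem~\ref{th:1:improved}) and a consistency term, then bound the latter by $\mathbb{E}_{n,P}(P)^{(1-\alpha)n}$ using the independence of the $P_i$'s across distinct labels. Your explicit observation that the bijection property of $\sigma'^2$ ensures each $P_i$ appears at most once in the consistency product is in fact a cleaner justification than the paper's terse ``since for $i\neq j$, $P_i$ and $P_j$ are independent,'' but the underlying argument is identical.
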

\begin{proof}
Please see \ref{Ap:th:24}. 
\end{proof}
\begin{Remark}
The scenario in Section \ref{Sub:Short} can be viewed as a limiting special case, where $f_P(\cdot)$ corresponds to a truncated Gaussian$(p, \sigma^2)$ over the unit interval and $\sigma^2$ is taken to be infinitesimally small.  
\end{Remark}
\subsection{Symmetrically Correlated Ambiguity Sets}
In many applications, the ambiguity sets are symmetrically correlated such that if the label of vertex $v_s$ is in the ambiguity set $\mathbfcal{L}_t$ for some $s,t\in [1,n]$, then it is more likely than usual for the label of $v_t$ to be in $\mathbfcal{L}_s$, i.e. $P(B_{s,j}=1|B_{t,i}=1)>P(B_{s,j}=1),$ where $i=\sigma^2(s)$ and $j=\sigma^2(t)$. For instance, in social network de-anoymization --- where the ambiguity sets may be generated based on user fingerprints \cite{shirani2018optimal,shirani2017information,wondracek2010practical} --- the ambiguity set for each user consists of the labels of users which have similar online behavior. If the label corresponding to user `$s$' is in the ambiguity set of user `$t$', then this implies that they have a similar online behavior, consequently, the label corresponding to user `$t$' is also likely to be in the ambiguity set of user `$s$'. To model this correlation, we consider the following stochastic ambiguity set model:
\begin{align*}
    P_{\mathbf{B}}(\mathbf{b})&= \prod_{s\in [1,n]} \mathbbm{1}(b_{s,\sigma^2(s)}=1)\times
    \\& \prod_{1\leq s<t\leq n} P_{U,V}(\mathbbm{1}(b_{s,\sigma^2(t)}=1),\mathbbm{1}(b_{t,\sigma^2(s)}=1)),
\end{align*}
where $P_{U,V}$ is a joint distribution on binary variables $U$ and $V$, and we assume that $P_U(\cdot)= P_{V}(\cdot)$. The distribution $P_{U,V}$ can be viewed as a model parameter, where the value of $P(U=V)$ controls how correlated the pairs of variables $(B_{s,\sigma^2(t)},B_{t,\sigma^2(s)}), s,t \in [1,n]$ are with each other. 
\begin{Theorem}
\label{th:25}
 Given sequences of edge distributions $P^{(n)}_{X,Y}$, and distributions $P_{n,U,V}$ on binary variables $U,V$ such that $P_U(\cdot)=P_V(\cdot)$, the TM strategy succeeds if:
\begin{align}
&2(1-\alpha)\frac{\log{n}}{n-1}\leq \nonumber
\\&
E_{\alpha^2}-(1-\alpha) \frac{\log{\max(P_{n,U,V}(1,1), P_{n,U}(1)P_{n,V}(1))}}{n}, 
\label{eq:th:5}
\end{align}
for all $0\leq \alpha\leq \alpha_n$, and $\max_{(x,y): P^{(n)}_{X,Y}(x,y)\neq 0}|\log{\frac{P^{(n)}_{X}(x)P^{(n)}_{Y}(y)}{P^{(n)}_{X,Y}(x,y)}}|^+= o(\log{n})$, 
where $\alpha_n\to 1$ as $n\to \infty$..
\end{Theorem}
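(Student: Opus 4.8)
The plan is to bound the probability that the set $\widehat{\Sigma}$ in \eqref{eq:sigma} contains any labeling whose error fraction is bounded away from zero; this suffices for the criterion of Definition \ref{Def:MSS}, since the strategy outputs a uniformly chosen element of $\widehat{\Sigma}$. I would index each candidate labeling $\hat{\sigma}^2$ by the relative permutation $\pi \triangleq (\sigma^2)^{-1}\circ\hat{\sigma}^2$ and group candidates by their number of fixed points $m=\alpha n$, so that $\hat{\sigma}^2$ agrees with $\sigma^2$ on exactly $m$ vertices. Because the ambiguity sets are generated independently of the graph edges (Remark~\ref{Rem:indep:amb}), the event $\hat{\sigma}^2\in\widehat{\Sigma}$ factors into two independent events: the pair of UT's being jointly typical, and $\hat{\sigma}^2$ being consistent with the ambiguity sets. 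A union bound then yields
\[
P(\exists\,\text{bad }\hat{\sigma}^2\in\widehat{\Sigma})\ \le\ \sum_{m} N_m\cdot P_{\mathrm{typ}}(\alpha)\cdot P_{\mathrm{cons}}(\pi),
\]
where $N_m\le n!/m!$ counts permutations with $m$ fixed points, and I must bound each of the two probabilities.

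For the typicality factor I would use the observation that a vertex permutation with $m$ fixed points fixes exactly $\binom{m}{2}$ entries of the UT, viewed as a sequence of length $\binom{n}{2}$, so the induced edge permutation has a fraction $\binom{m}{2}/\binom{n}{2}\approx\alpha^2$ of fixed points. Applying Theorem \ref{th:1:improved} to the length-$\binom{n}{2}$ sequences $(U^1_{\sigma^1},U^2_{\hat{\sigma}^2})$ gives $P_{\mathrm{typ}}(\alpha)\le 2^{-\binom{n}{2}(E_{\alpha^2}-\zeta-\delta_\epsilon)}$, which is exactly where the exponent $E_{\alpha^2}$ originates. The hypothesis $\max_{(x,y)}|\log\tfrac{P_XP_Y}{P_{X,Y}}|^+=o(\log n)$ is precisely what forces $\delta_\epsilon$ to be negligible relative to $\tfrac{\log n}{n}$, while $\zeta\to 0$.

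The new ingredient, and the step I expect to be the main obstacle, is bounding the consistency probability $P_{\mathrm{cons}}(\pi)=P(B_{s,\sigma^2(\pi(s))}=1\text{ for all non-fixed }s)$ under the symmetric model. Here I would decompose $\pi$ into cycles and exploit that, in the model's product over unordered pairs $\{s,t\}$, the two indicators at a pair are jointly distributed as $P_{U,V}$ while distinct pairs are independent. A transposition $(s,t)$ forces both $B_{s,\sigma^2(t)}=1$ and $B_{t,\sigma^2(s)}=1$ at the \emph{same} pair, contributing $P_{U,V}(1,1)$, whereas a cycle of length $k\ge 3$ touches $k$ distinct pairs in a single direction each, contributing a product of marginals $P_U(1)^k$. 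Bounding the per-vertex factor of either type by $\max(\sqrt{P_{U,V}(1,1)},P_U(1))$ and using $P_U=P_V$ gives $P_{\mathrm{cons}}(\pi)\le \big(\max(P_{U,V}(1,1),P_U(1)P_V(1))\big)^{(1-\alpha)n/2}$. The delicate point is that positive correlation ($P_{U,V}(1,1)>P_U(1)P_V(1)$) makes transpositions the dominant failure mode, which is exactly why the worst case is captured by the maximum appearing in \eqref{eq:th:5}.

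Finally I would combine the three factors. Since $\log_2 N_m\le(1-\alpha)n\log_2 n$, the logarithm of a single term in the union bound is at most $(1-\alpha)n\log_2 n+\tfrac{(1-\alpha)n}{2}\log_2\max(P_{U,V}(1,1),P_U(1)P_V(1))-\binom{n}{2}(E_{\alpha^2}-\zeta-\delta_\epsilon)$. Dividing through by $\binom{n}{2}=\tfrac{n(n-1)}{2}$ shows this is negative precisely under condition \eqref{eq:th:5} (the $n$ versus $n-1$ discrepancy in the correlation term being asymptotically immaterial), so each term decays super-polynomially; summing over the $O(n)$ values of $m$ with $\alpha\in[0,\alpha_n]$ keeps the total negligible. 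As $\alpha_n\to 1$, this rules out every consistent labeling with a non-vanishing error fraction, leaving in $\widehat{\Sigma}$ only labelings correct on a $1-o(1)$ fraction of vertices, which establishes success in the sense of Definition \ref{Def:MSS}.
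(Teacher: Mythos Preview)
Your proposal is correct and follows essentially the same route as the paper's proof: union bound over labelings grouped by number of fixed points, factorization into the typicality event (bounded via Theorem~\ref{th:1:improved} with the induced edge-permutation having $\binom{m}{2}$ fixed points, yielding the exponent $E_{\alpha^2}$) and the consistency event, and the count $N_m\le n^{n-m}$. Your cycle-decomposition argument for the consistency probability---transpositions contributing $P_{U,V}(1,1)$, $k$-cycles with $k\ge 3$ contributing $P_U(1)^k=(P_U(1)P_V(1))^{k/2}$ because they hit $k$ distinct unordered pairs---is actually more carefully stated than the paper's own one-line justification of the bound $P_{\mathrm{cons}}\le\big(\max(P_{U,V}(1,1),P_U(1)P_V(1))\big)^{(n-i)/2}$, but the content is identical.
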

\begin{Remark}
Note that if we take $P_{n,U,V}$ such that $U$ and $V$ are independent of each other and $P_{n,U}(1)=P_{n,V}(1)=p_n$, then we recover the conditions described in Equation \eqref{eq:th:2}.
\end{Remark}

\section{Necessary Conditions for Successful Matching}
\label{Sec:Nec}

To evaluate the necessary conditions for successful matching, let us assume that the ambiguity sets are generated randomly based on the distribution $P_{\mathbf{B}}$, and the labeling function  
$(\bf{\sigma}^1,\bf{\sigma}^2)$ are chosen randomly and uniformly among the set of all labeling pairs which are consistent  with the ambiguity sets, i.e. labeling pairs for which $B_{s,\sigma^2(s)}=1, s\in [1,n]$. 

\begin{Theorem}
\label{th:converse}
The following conditions are necessary for successful matching:
\begin{itemize}[leftmargin=*]
    \item \textbf{Seeded Matching:} Let $\gamma_n\in [0,1], n\in \mathbb{N}$:
    \vspace{-0.05in}
    \begin{align*}
        2(1-\gamma_n)\frac{\log{n}}{n}\leq I(X;Y)+o(\frac{\log{n}}{n})
    \end{align*}
     \item \textbf{Equiprobable Ambiguity Sets:} Let $p_n= n^{-\alpha}, 0<\alpha<1$:
         \begin{align*}
        2\frac{\log{n}}{n}\leq I(X;Y)-2\frac{\log{p_n}}{n}+o(\frac{\log{n}}{n})
    \end{align*}
    \item \textbf{Randomly Generated Ambiguity Set Distribution:}
         \begin{align*}
        2\frac{\log{n}}{n}\leq I(X;Y)-2\frac{\log{\mathbb{E}_{n,P}(P)}}{n}+o(\frac{\log{n}}{n})
    \end{align*}
        \item \textbf{Symmetrically Correlated Ambiguity Sets:} Assume that $\theta_n=\max(P_{n,U,V}(1,1), P_{n,U}(1)P_{n,V}(1))= n^{-\zeta}, 0<\zeta<1$,
         \begin{align*}
        &2\frac{\log{n}}{n}\leq I(X;Y)
        -\frac{\log{\theta_n}}{n}+o(\frac{\log{n}}{n}).
    \end{align*}
\end{itemize}
    \vspace{-.1in}
\end{Theorem}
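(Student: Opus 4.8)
The plan is to derive all four necessary conditions from a single Fano-type master inequality and then specialize the entropy term to each ambiguity-set model. Throughout I adopt the generative viewpoint in which $\sigma^2$ is first drawn uniformly over all permutations of $[1,n]$ and $\mathbf{B}$ is then generated according to $P_{\mathbf{B}}$ (which forces $B_{s,\sigma^2(s)}=1$); by the relabeling symmetry of each model the posterior of $\sigma^2$ given $\mathbf{B}=\mathbf{b}$ is uniform over the labelings consistent with $\mathbf{b}$, so that $H(\sigma^2\mid\mathbf{B})=\mathbb{E}[\log\mathrm{perm}(\mathbf{B})]$, where $\mathrm{perm}(\mathbf{b})$ counts the consistent labelings. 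I write the matcher's observation as $O=(\tilde g^1,g^2,\mathbf{B})$ and recall from Remark \ref{Rem:indep:amb} that $\mathbf{B}$ is independent of the graph pair.

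First I would carry out the Fano step at the per-vertex level to match the success criterion of Definition \ref{Def:MSS}. Writing $P_{e,s}=P(\hat\sigma^2(s)\neq\sigma^2(s))$ and $P_e=\frac1n\sum_s P_{e,s}\to0$, Fano applied to the estimate $\hat\sigma^2(s)\in[1,n]$ of $\sigma^2(s)$ gives $H(\sigma^2(s)\mid O)\le 1+P_{e,s}\log n$; summing over $s$ and using subadditivity of conditional entropy yields $H(\sigma^2\mid O)\le\sum_s H(\sigma^2(s)\mid O)\le n+nP_e\log n=o(n\log n)$. Next I bound the information the graphs reveal about the labeling: since $\tilde g^1$ is independent of $(\sigma^2,\mathbf{B})$ and, conditioned on $\sigma^2$, the two adjacency matrices are coupled edge-by-edge through $P_{X,Y}$ and are independent of $\mathbf{B}$, a one-shot computation gives $I(\sigma^2;\tilde g^1,g^2\mid\mathbf{B})=I(\sigma^2;g^2\mid\tilde g^1,\mathbf{B})\le\binom n2\big(H(Y)-H(Y\mid X)\big)=\binom n2 I(X;Y)$. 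Combining the two displays through $H(\sigma^2\mid\mathbf{B})=H(\sigma^2\mid O)+I(\sigma^2;\tilde g^1,g^2\mid\mathbf{B})$ produces the master inequality
\begin{align*}
I(X;Y)\ge\frac{2}{n(n-1)}\,H(\sigma^2\mid\mathbf{B})-o\!\left(\tfrac{\log n}{n}\right).
\end{align*}
The factor $2$ in every claimed condition is precisely the $\binom n2\approx n^2/2$ appearing here; a cruder per-vertex mutual-information bound (each vertex has $n-1$ incident edges) would double-count edges and lose this factor, so it is essential that the argument is run jointly on the whole permutation.

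It then remains to lower bound $H(\sigma^2\mid\mathbf{B})=\mathbb{E}[\log\mathrm{perm}(\mathbf{B})]$ in each model and substitute. In the seeded case the consistent labelings are exactly the bijections fixing the $\gamma_n n$ seeds, so $\mathrm{perm}(\mathbf{B})=((1-\gamma_n)n)!$ deterministically and Stirling gives $H(\sigma^2\mid\mathbf{B})=(1-\gamma_n)n\log n-o(n\log n)$, yielding the first bound. For the equiprobable model the off-diagonal entries are independent $\mathrm{Bernoulli}(p_n)$ given $\sigma^2$, and here the clean route is $H(\sigma^2\mid\mathbf{B})=\log n!-I(\sigma^2;\mathbf{B})$ with $I(\sigma^2;\mathbf{B})=H(\mathbf{B})-H(\mathbf{B}\mid\sigma^2)$; bounding $H(\mathbf{B})$ by its sum of marginal entropies and using $H(\mathbf{B}\mid\sigma^2)=n(n-1)H(\mathrm{Bernoulli}(p_n))$ gives $I(\sigma^2;\mathbf{B})=n\log\tfrac1{p_n}+o(n\log n)$, hence $H(\sigma^2\mid\mathbf{B})=n\log(np_n)-o(n\log n)$, i.e. the second bound with $p_n=n^{-\alpha}$.

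The main obstacle is the entropy estimate in the two correlated models, where the naive ``sum of marginal entropies'' bound on $H(\mathbf{B})$ is not merely loose but vacuous: for strongly correlated entries it overestimates $I(\sigma^2;\mathbf{B})$ by a $\Theta(n^2)$ term and would force $H(\sigma^2\mid\mathbf{B})<0$. The correct approach is to estimate the (expected log-)permanent of the correlated random $0/1$ matrix directly. For the randomly generated model the candidate graph has column $i$ included with a shared probability $P_i\sim f_P$, and one expects $\log\mathrm{perm}(\mathbf{B})\approx n\log\big(n\,\mathbb{E}[P]\big)$, giving the third bound, with the within-column correlation handled by conditioning on the profile $P^n$. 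For the symmetric model the candidate graph is built from the exchangeable pairs $(B_{s,\sigma^2(t)},B_{t,\sigma^2(s)})\sim P_{U,V}$, and positive correlation ($P_{U,V}(1,1)>P_U(1)P_V(1)$) creates additional swappable label pairs that \emph{inflate} the permanent; a careful permanent-concentration analysis should yield $H(\sigma^2\mid\mathbf{B})=n\log n+\tfrac n2\log\theta_n-o(n\log n)$ with $\theta_n=\max\big(P_{U,V}(1,1),P_U(1)P_V(1)\big)$, reproducing the last bound and collapsing to the equiprobable bound when $U\perp V$, consistent with the accompanying remark. Establishing this lower bound — showing the number of consistent labelings concentrates above its correlation-inflated mean — is the technically delicate step, since the first-moment (Jensen) inequality points the wrong way and the entries are dependent.
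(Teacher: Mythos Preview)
Your overall architecture --- a Fano-type bound yielding $H(\sigma^2\mid\mathbf{B})\le \binom{n}{2}I(X;Y)+o(n\log n)$, followed by a case-by-case lower bound on $H(\sigma^2\mid\mathbf{B})=\mathbb{E}[\log|\Sigma_{\mathbf{B}}|]$ --- is exactly the paper's. Your per-vertex Fano (sum Fano over each coordinate $\sigma^2(s)$) is a clean variant of the paper's argument, which instead introduces an indicator $\mathbbm{1}_C$ for ``at most $\epsilon_n n$ mismatches'' and runs a single Fano on the whole permutation; both deliver $H(\sigma^2\mid O)=o(n\log n)$.

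For the equiprobable model you take a genuinely different route: you upper-bound $I(\sigma^2;\mathbf{B})=H(\mathbf{B})-H(\mathbf{B}\mid\sigma^2)$ via subadditivity of $H(\mathbf{B})$, obtaining $I(\sigma^2;\mathbf{B})\le n\log(1/p_n)+o(n\log n)$ and hence $H(\sigma^2\mid\mathbf{B})\ge n\log(np_n)-o(n\log n)$. The paper instead shows directly that $|\Sigma_{\mathbf{B}}|$ concentrates around its mean $p_n^n\,n!$ by a second-moment (Chebyshev) argument, computing $\mathrm{Var}(|\Sigma_{\mathbf{B}}|)=\sum_{\sigma',\sigma''}\mathrm{Cov}\big(\mathbbm{1}(\sigma'\in\Sigma_{\mathbf{B}}),\mathbbm{1}(\sigma''\in\Sigma_{\mathbf{B}})\big)$ stratified by the Hamming distance $\|\sigma'-\sigma''\|_0$. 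Your entropy route is shorter and avoids the variance calculation; the paper's route has the advantage that it extends uniformly to the correlated models.

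That extension is precisely where your proposal is incomplete. You correctly diagnose that for the randomly generated and symmetrically correlated models the subadditivity bound on $H(\mathbf{B})$ is useless and that one must control $\mathbb{E}[\log|\Sigma_{\mathbf{B}}|]$ via concentration of the permanent, and you note that Jensen goes the wrong way. The paper does exactly the second-moment computation you allude to: for each model it evaluates $\mathbb{E}[\mathbbm{1}(\sigma'\in\Sigma_{\mathbf{B}})\mathbbm{1}(\sigma''\in\Sigma_{\mathbf{B}})]$ as a function of $k=\|\sigma'-\sigma''\|_0$ (for the random-$P$ model this is $\mathbb{E}^{\,n-k}(P)\,\mathbb{E}^{\,k}(P^2)$; for the symmetric model it depends on how many transpositions $\sigma'$ and $\sigma''$ share), sums over $k$ to bound the variance, and then Chebyshev gives $|\Sigma_{\mathbf{B}}|$ within a subexponential factor of its mean with probability $1-o(1)$. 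Filling in these covariance computations is the only missing ingredient in your plan; the rest is sound.
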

\textit{Proof Outline:} The uncertainty in $\bf{\sigma}^2$ is $H(\sigma^2|\mathbf{B})=\mathbb{E}(\log{|\Sigma|})$, where $\Sigma$ is the set of labelings which are consistent with the ambiguity sets. Consequently, using Fano's inequality, the information provided by $(\sigma, g,g',\mathbf{B})$ regarding $\bf{\sigma}^2$, which is quantified as the mutual information $I(\sigma^2; \sigma, g,g',\mathbf{B})$, must be larger than $\mathbb{E}(\log{|\Sigma|})$.  The complete proof is provide in \ref{App:th:6}.

\section{Conclusion}
\label{Sec:Con}
Matching of pair of correlated ER graphs 
in the presence of ambiguity set side-information was considered. Several stochastic models for ambiguity set generation were proposed. The TM strategy was proposed and sufficient conditions for its success were derived under several stochastic models on the ambiguity sets. Furthermore, converse results in the form of necessary conditions for successful matching on the edge statistics and ambiguity set statistics were derived.

\begin{appendices}
\section{Proof of Theorem \ref{th:21}}
\label{Ap:th:21}
The proof builds upon the proof of Theorem 4 in \cite{shirani2020concentration}. For 
the correct labeling, the two UTs are jointly typical with probability approaching one as $n\to \infty$:
\begin{align*}
P((U^{1}_{{\sigma}^1},U^{2}_{{\sigma}^2})\in \mathcal{A}_{\epsilon}^{\frac{n(n-1)}{2}}(X_1,X_2))\to 1 \quad \text{as}\quad n\to \infty.
\end{align*}
So, $P(\widehat{\Sigma}=\phi)\to 0$ as $n\to \infty$ since the correct labeling is a member of the set $\widehat{\Sigma}$.
We will show that the probability that a labeling in $\widehat{\Sigma}$ labels $n(1-\alpha_n)$ vertices incorrectly goes to $0$ as $n\to \infty$. 
Define the following:
\begin{align*}
 \mathcal{E}=\{{\sigma'}^2\in \Big| ||\sigma^2-{\sigma'}^2||_0\geq n(1-\alpha_n), \hat{\sigma}^2(s)\in \mathcal{L}_s, s\in [1,n]\},
\end{align*}
where $||\cdot||_0$ is the $L_0$-norm. The set $\mathcal{E}$ is the set of all consistent labelings which match more than $n\alpha_n$ vertices incorrectly.
We show the following:
\begin{align*}
 P(\mathcal{E}\cap \widehat{\Sigma}\neq \phi)\to 0, \qquad \text{as} \qquad n\to \infty.
 \end{align*}
Note that:
\begin{align*}
  &
  P(\mathcal{E}\cap \widehat{\Sigma}\neq \phi)
  = P\left(\bigcup_{\substack{{\sigma'}^2: ||\sigma^2-{\sigma'}^2||_0\geq n(1-\alpha_n)\\ {\sigma'}^2(s)\in \mathcal{L}_s, s\in [1,n]}}\{{\sigma'}^2\in  \widehat{\Sigma}\}\right)
  \\&\stackrel{(a)}{\leq} \sum_{i=0}^{n\alpha_n}\sum_{{\sigma'}^2: ||\sigma^2-{\sigma'}^2||_0=n-i}P({\sigma'}^2(s)\in \mathcal{L}_s, s\in [1,n]) P({\sigma'}^2\in  \widehat{\Sigma})
  \\&\stackrel{(b)}{=}  {\sum_{i=0}^{n\alpha_n}\sum_{\substack{{\sigma'}^2: ||\sigma^2-{\sigma'}^2||_i=n-i\\ {\sigma'}^2_{s}=\sigma^2(s):v_{s}\in \mathcal{S}}}
   P((U^{1}_{{\sigma}^1},\Pi_{\sigma^2,{\sigma'}^2}(U^{2}_{{\sigma}^2}))\in \mathcal{A}_{\epsilon}^{\frac{n(n-1)}{2}})}
\\&\stackrel{(c)}{\leq} {\sum_{i=0}^{n\alpha_n}
\sum_{\substack{{\sigma'}^2: ||\sigma^2-{\sigma'}^2||_i=n-i\\ {\sigma'}^2_{s}=\sigma^2(s):v_{s}\in \mathcal{S}}}\exp_2\Big(-\frac{n(n-1)}{2}\left(E_{\frac{i(i-1)}{n(n-1)}}-\zeta_{\frac{n(n-1)}{2}}- \delta_{\epsilon}\right)
  \Big)}
   \\&\stackrel{(d)}{\leq}  
   \sum_{i=n\gamma_n}^{n\alpha_n} {n \choose i}(!(n-i))
  \exp_2\Big(-\frac{n(n-1)}{2}\left(E_{\frac{i(i-1)}{n(n-1)}}-\zeta_{\frac{n(n-1)}{2}}- \delta_{\epsilon}\right)
  \\&\leq  \sum_{i=n\gamma_n}^{n\alpha_n} n^{n-i}
  \exp_2\Big(-\frac{n(n-1)}{2}\left(E_{\frac{i(i-1)}{n(n-1)}}-\zeta_{\frac{n(n-1)}{2}}- \delta_{\epsilon}\right)
  \\&\leq
  \sum_{i=n\gamma_n}^{n\alpha_n} 
 \exp_2\Big((n-i)\log{n}-\frac{n(n-1)}{2}\left(E_{\frac{i(i-1)}{n(n-1)}}-\zeta_{\frac{n(n-1)}{2}}- \delta_{\epsilon}\right)\Big).
\end{align*}
where (a) follows from the union bound and the assumption of independence of edges and shortlist elements, (b) follows from the definition of $ \widehat{\Sigma}$, and the fact that $P(\sigma'(s)\in \mathcal{L}_s, s\in [1,n])= \prod_{s\in [1,n]}\mathbbm{1}({\sigma'}^2(s)=\sigma^2(s))$, in (c) we have used Theorem \ref{th:1:improved} and the fact that $||\sigma^2-{\sigma'}^2||_0=n-i$ so that $\Pi_{\sigma^2,{\sigma'}^2}$ has $\frac{i(i-1)}{2}$ fixed points, and in (d) we have denoted the number of derangement of sequences of length $i$ by $!i$. Note that the right hand side in the last inequality approaches 0 as $n\to \infty$ as long as:
\begin{align*}
   &{ (n-i+3)\log{n}\leq \frac{n(n-1)}{2}\left(E_{\frac{i(i-1)}{n(n-1)}}-\zeta_{\frac{n(n-1)}{2}}- \delta_{\epsilon}\right), i\in [n\gamma_n,n\alpha_n]}
 \\&{  \iff
 (1-\alpha)\log{n}\leq \frac{n-1}{2}\left(E_{\alpha^2}, -\zeta_{\frac{n(n-1)}{2}}- \delta_{\epsilon}\right), i\in [n\gamma_n,n\alpha_n]}
\end{align*}
where we have defined $\alpha=\frac{i}{n}$. The last equation is satisfied by the theorem assumption for small enough $\epsilon$ and large enough $n$ {by noting that $\zeta_{\frac{n(n-1)}{2}}= O(\frac{\log{n}}{n^2})$ and $\delta_{\epsilon}= \epsilon o(\log{n})= o(\frac{\log{n}}{n})$ since $\max_{(x_1,x_2): P^{(n)}_{X_1,X_2}(x_1,x_2)\neq 0}|\log{\frac{P_{X_1}(x_1)P_{X_2}(x_2)}{P_{X_1,X_2}(x_1,x_2)}}|^+= o(\log{n})$ by assumption and $n\epsilon$ can be taken to go to infinity arbitrarily slowly for the probability of the typical set to approach one asymptotically. }
\qedsymbol
\section{Proof of Theorem \ref{th:22}}
\label{Ap:th:22}
Following the arguments in the proof of Theorem \ref{th:21}, we have:
\begin{align*}
  &
  P(\mathcal{E}\cap \widehat{\Sigma}\neq \phi)
  = P\left(\bigcup_{\substack{{\sigma'}^2: ||\sigma^2-{\sigma'}^2||_0\geq n(1-\alpha_n)\\ {\sigma'}^2(s)\in \mathcal{L}_s, s\in [1,n]}}\{{\sigma'}^2\in  \widehat{\Sigma}\}\right)
  \\&\stackrel{(a)}{\leq} \sum_{i=0}^{n\alpha_n}\sum_{{\sigma'}^2: ||\sigma^2-{\sigma'}^2||_0=n-i}P({\sigma'}^2(s)\in \mathcal{L}_s, s\in [1,n]) P({\sigma'}^2\in  \widehat{\Sigma})
  \\&\stackrel{(b)}{=}  {\sum_{i=0}^{n\alpha_n}\sum_{{\sigma'}^2: ||\sigma^2-{\sigma'}^2||_i=n-i}p^{n-i}
   P((U^{1}_{{\sigma}^1},\Pi_{\sigma^2,{\sigma'}^2}(U^{2}_{{\sigma}^2}))\in \mathcal{A}_{\epsilon}^{\frac{n(n-1)}{2}})}
\\&\stackrel{(c)}{\leq} {\sum_{i=0}^{n\alpha_n}\sum_{{\sigma'}^2: ||\sigma^2-{\sigma'}^2||_i=n-i}
 p^{n-i} \exp_2\Big(-\frac{n(n-1)}{2}\left(E_{\frac{i(i-1)}{n(n-1)}}-\zeta_{\frac{n(n-1)}{2}}- \delta_{\epsilon}\right)
  \Big)}
   \\&\stackrel{(d)}{=}  
   \sum_{i=0}^{n\alpha_n} {n \choose i}(!(n-i))
   p^{n-i}
  \exp_2\Big(-\frac{n(n-1)}{2}\left(E_{\frac{i(i-1)}{n(n-1)}}-\zeta_{\frac{n(n-1)}{2}}- \delta_{\epsilon}\right)
  \\&\leq  \sum_{i=0}^{n\alpha_n} n^{n-i}
 p^{n-i}
  \exp_2\Big(-\frac{n(n-1)}{2}\left(E_{\frac{i(i-1)}{n(n-1)}}-\zeta_{\frac{n(n-1)}{2}}- \delta_{\epsilon}\right)
  \\&\leq
  \sum_{i=0}^{n\alpha_n} 
 \exp_2\Big((n-i)\log{np}-\frac{n(n-1)}{2}\left(E_{\frac{i(i-1)}{n(n-1)}}-\zeta_{\frac{n(n-1)}{2}}- \delta_{\epsilon}\right)\Big).
\end{align*}
where (a) follows from the union bound and the assumption of independence of edges and shortlist elements, (b) follows from the definition of $ \widehat{\Sigma}$, and the fact that $P(\sigma'(s)\in \mathcal{L}_s, s\in [1,n])= \prod_{s\in [1,n]} P(\sigma'(s)\in \mathcal{L}_s)= p^{n-i}$ since $||\sigma^2-{\sigma'}^2||_0=n-i$, in (c) we have used Theorem \ref{th:1:improved} and the fact that $||\sigma^2-{\sigma'}^2||_0=n-i$ so that $\Pi_{\sigma^2,{\sigma'}^2}$ has $\frac{i(i-1)}{2}$ fixed points, and in (d) we have denoted the number of derangement of sequences of length $i$ by $!i$. Note that the right hand side in the last inequality approaches 0 as $n\to \infty$ as long as:
\begin{align*}
   &{ (n-i+3)\log{n}\leq \frac{n(n-1)}{2}\left(E_{\frac{i(i-1)}{n(n-1)}}-\zeta_{\frac{n(n-1)}{2}}- \delta_{\epsilon}\right)-(n-i)\log{p}, i\in [0,n\alpha_n]}
 \\&{  \iff
 (1-\alpha)\log{n}\leq \frac{n-1}{2}\left(E_{\alpha^2}, -\zeta_{\frac{n(n-1)}{2}}- \delta_{\epsilon}\right)-2(1-\alpha)\log{p},}
\end{align*}
where we have defined $\alpha=\frac{i}{n}$. The last equation is satisfied by the theorem assumption for small enough $\epsilon$ and large enough $n$ by noting that $\zeta_{\frac{n(n-1)}{2}}= O(\frac{\log{n}}{n^2})$ and $\delta_{\epsilon}= \epsilon o(\log{n})= o(\frac{\log{n}}{n})$ since $\max_{(x_1,x_2): P^{(n)}_{X_1,X_2}(x_1,x_2)\neq 0}|\log{\frac{P_{X_1}(x_1)P_{X_2}(x_2)}{P_{X_1,X_2}(x_1,x_2)}}|^+= o(\log{n})$. 
\qedsymbol
\section{Proof of Theorem \ref{th:24}}
\label{Ap:th:24}
Following the arguments in the proof of Theorem \ref{th:21}, we have:
\begin{align*}
  &
  P(\mathcal{E}\cap \widehat{\Sigma}\neq \phi)
  = P\left(\bigcup_{\substack{{\sigma'}^2: ||\sigma^2-{\sigma'}^2||_0\geq n(1-\alpha_n)\\ {\sigma'}^2(s)\in \mathcal{L}_s, s\in [1,n]}}\{{\sigma'}^2\in  \widehat{\Sigma}\}\right)
  \\&\stackrel{(a)}{\leq} \sum_{i=0}^{n\alpha_n}\sum_{{\sigma'}^2: ||\sigma^2-{\sigma'}^2||_0=n-i}P({\sigma'}^2(s)\in \mathcal{L}_s, s\in [1,n]) P({\sigma'}^2\in  \widehat{\Sigma})
  \\&\stackrel{(b)}{\leq}  {\sum_{i=0}^{n\alpha_n}\sum_{{\sigma'}^2: ||\sigma^2-{\sigma'}^2||_i=n-i}
  \mathbb{E}^{n-i}(P)
   P((U^{1}_{{\sigma}^1},\Pi_{\sigma^2,{\sigma'}^2}(U^{2}_{{\sigma}^2}))\in \mathcal{A}_{\epsilon}^{\frac{n(n-1)}{2}})}
\\&\stackrel{(c)}{\leq} {\sum_{i=0}^{n\alpha_n}\sum_{{\sigma'}^2: ||\sigma^2-{\sigma'}^2||_i=n-i}
 \mathbb{E}^{n-i}(P) \exp_2\Big(-\frac{n(n-1)}{2}\left(E_{\frac{i(i-1)}{n(n-1)}}-\zeta_{\frac{n(n-1)}{2}}- \delta_{\epsilon}\right)
  \Big)}
   \\&\stackrel{(d)}{=}  
   \sum_{i=0}^{n\alpha_n} {n \choose i}(!(n-i))
   \mathbb{E}^{n-i}(P)
  \exp_2\Big(-\frac{n(n-1)}{2}\left(E_{\frac{i(i-1)}{n(n-1)}}-\zeta_{\frac{n(n-1)}{2}}- \delta_{\epsilon}\right)
  \\&\leq  \sum_{i=0}^{n\alpha_n} n^{n-i}
 \mathbb{E}^{n-i}(P)
  \exp_2\Big(-\frac{n(n-1)}{2}\left(E_{\frac{i(i-1)}{n(n-1)}}-\zeta_{\frac{n(n-1)}{2}}- \delta_{\epsilon}\right)
  \\&\leq
  \sum_{i=0}^{n\alpha_n} 
 \exp_2\Big((n-i)\log{n\mathbb{E}(P)}-\frac{n(n-1)}{2}\left(E_{\frac{i(i-1)}{n(n-1)}}-\zeta_{\frac{n(n-1)}{2}}- \delta_{\epsilon}\right)\Big).
\end{align*}
where (a) follows from the union bound and the assumption of independence of edges and shortlist elements, (b) follows from the definition of $ \widehat{\Sigma}$, and the fact that $P(\sigma'(s)\in \mathcal{L}_s, s\in [1,n])= \prod_{s\in [1,n]} P(\sigma'(s)\in \mathcal{L}_s)= \mathbb{E}^{n-i}(P)$ since for $i\neq j$, $P_i$ and $P_j$ are independent and  $||\sigma^2-{\sigma'}^2||_0=n-i$, in (c) we have used Theorem \ref{th:1:improved} and the fact that $||\sigma^2-{\sigma'}^2||_0=n-i$ so that $\Pi_{\sigma^2,{\sigma'}^2}$ has $\frac{i(i-1)}{2}$ fixed points, and in (d) we have denoted the number of derangement of sequences of length $i$ by $!i$. Note that the right hand side in the last inequality approaches 0 as $n\to \infty$ as long as:
\begin{align*}
   &{ (n-i+3)\log{n}\leq \frac{n(n-1)}{2}\left(E_{\frac{i(i-1)}{n(n-1)}}-\zeta_{\frac{n(n-1)}{2}}- \delta_{\epsilon}\right)-(n-i)\log{\mathbb{E}(P)}, i\in [0,n\alpha_n]}
 \\&{  \iff
 (1-\alpha)\log{n}\leq \frac{n-1}{2}\left(E_{\alpha^2}, -\zeta_{\frac{n(n-1)}{2}}- \delta_{\epsilon}\right)-2(1-\alpha)\log{\mathbb{E}(P)},}
\end{align*}
where we have defined $\alpha=\frac{i}{n}$. The last equation is satisfied by the theorem assumption for small enough $\epsilon$ and large enough $n$ by noting that $\zeta_{\frac{n(n-1)}{2}}= O(\frac{\log{n}}{n^2})$ and $\delta_{\epsilon}= \epsilon o(\log{n})= o(\frac{\log{n}}{n})$ since $\max_{(x_1,x_2): P^{(n)}_{X_1,X_2}(x_1,x_2)\neq 0}|\log{\frac{P_{X_1}(x_1)P_{X_2}(x_2)}{P_{X_1,X_2}(x_1,x_2)}}|^+= o(\log{n})$. 
\qedsymbol

\section{Proof of Theorem \ref{th:25}}
\label{Ap:th:25}
Following the arguments in the proof of Theorem \ref{th:21}, we have:
\begin{align*}
  &
  P(\mathcal{E}\cap \widehat{\Sigma}\neq \phi)
  = P\left(\bigcup_{\substack{{\sigma'}^2: ||\sigma^2-{\sigma'}^2||_0\geq n(1-\alpha_n)\\ {\sigma'}^2(s)\in \mathcal{L}_s, s\in [1,n]}}\{{\sigma'}^2\in  \widehat{\Sigma}\}\right)
  \\&\stackrel{(a)}{\leq} \sum_{i=0}^{n\alpha_n}\sum_{{\sigma'}^2: ||\sigma^2-{\sigma'}^2||_0=n-i}P({\sigma'}^2(s)\in \mathcal{L}_s, s\in [1,n]) P({\sigma'}^2\in  \widehat{\Sigma})
  \\&\stackrel{(b)}{=}  {\sum_{i=0}^{n\alpha_n}\sum_{{\sigma'}^2: ||\sigma^2-{\sigma'}^2||_i=n-i}  \max(P_{U,V}(1,1), P_U(1)P_V(1))^{\frac{n-i}{2}}
   P((U^{1}_{{\sigma}^1},\Pi_{\sigma^2,{\sigma'}^2}(U^{2}_{{\sigma}^2}))\in \mathcal{A}_{\epsilon}^{\frac{n(n-1)}{2}})}
\\&\stackrel{(c)}{\leq} {\sum_{i=0}^{n\alpha_n}\sum_{{\sigma'}^2: ||\sigma^2-{\sigma'}^2||_i=n-i}
   \max(P_{U,V}(1,1), P_U(1)P_V(1))^{\frac{n-i}{2}}\exp_2\Big(-\frac{n(n-1)}{2}\left(E_{\frac{i(i-1)}{n(n-1)}}-\zeta_{\frac{n(n-1)}{2}}- \delta_{\epsilon}\right)
  \Big)}
   \\&\stackrel{(d)}{=}  
   \sum_{i=0}^{n\alpha_n} {n \choose i}(!(n-i))
   \max(P_{U,V}(1,1), P_U(1)P_V(1))^{\frac{n-i}{2}}
  \exp_2\Big(-\frac{n(n-1)}{2}\left(E_{\frac{i(i-1)}{n(n-1)}}-\zeta_{\frac{n(n-1)}{2}}- \delta_{\epsilon}\right)
  \\&\leq  \sum_{i=0}^{n\alpha_n} n^{n-i}
 \max(P_{U,V}(1,1), P_U(1)P_V(1))^{\frac{n-i}{2}}
  \exp_2\Big(-\frac{n(n-1)}{2}\left(E_{\frac{i(i-1)}{n(n-1)}}-\zeta_{\frac{n(n-1)}{2}}- \delta_{\epsilon}\right)
  \\&\leq
  \sum_{i=0}^{n\alpha_n} 
 \exp_2\Big((n-i)\log{(n\sqrt{\max(P_{U,V}(1,1), P_U(1)P_V(1))})}-\frac{n(n-1)}{2}\left(E_{\frac{i(i-1)}{n(n-1)}}-\zeta_{\frac{n(n-1)}{2}}- \delta_{\epsilon}\right)\Big).
\end{align*}
where (a) follows from the union bound and the assumption of independence of edges and shortlist elements, (b) follows from the definition of $ \widehat{\Sigma}$, and the fact that $P(\sigma'(s)\in \mathcal{L}_s, s\in [1,n])= \prod_{s\in [1,n]} P(\sigma'(s)\in \mathcal{L}_s)\leq \max(P_{U,V}(1,1), P_U(1)P_V(1))$ since if two indices are transposed by a labeling they would contribute $P_{U,V}(1,1)$ and if they are not transposed, they would contribute $P_{U}(1)P_V(1)$ and  $||\sigma^2-{\sigma'}^2||_0=n-i$, in (c) we have used Theorem \ref{th:1:improved} and the fact that $||\sigma^2-{\sigma'}^2||_0=n-i$ so that $\Pi_{\sigma^2,{\sigma'}^2}$ has $\frac{i(i-1)}{2}$ fixed points, and in (d) we have denoted the number of derangement of sequences of length $i$ by $!i$. Note that the right hand side in the last inequality approaches 0 as $n\to \infty$ as long as:
\begin{align*}
   &{ (n-i+3)\log{n}\leq \frac{n(n-1)}{2}\left(E_{\frac{i(i-1)}{n(n-1)}}-\zeta_{\frac{n(n-1)}{2}}- \delta_{\epsilon}\right)-\frac{n-i}{2}\log{\max(P_{U,V}(1,1), P_U(1)P_V(1))}, i\in [0,n\alpha_n]}
 \\&{  \iff
 (1-\alpha)\log{n}\leq \frac{n-1}{2}\left(E_{\alpha^2}, -\zeta_{\frac{n(n-1)}{2}}- \delta_{\epsilon}\right)-(1-\alpha){\log\max(P_{U,V}(1,1), P_U(1)P_V(1))},}
\end{align*}
where we have defined $\alpha=\frac{i}{n}$. The last equation is satisfied by the theorem assumption for small enough $\epsilon$ and large enough $n$ by noting that $\zeta_{\frac{n(n-1)}{2}}= O(\frac{\log{n}}{n^2})$ and $\delta_{\epsilon}= \epsilon o(\log{n})= o(\frac{\log{n}}{n})$ since $\max_{(x_1,x_2): P^{(n)}_{X_1,X_2}(x_1,x_2)\neq 0}|\log{\frac{P_{X_1}(x_1)P_{X_2}(x_2)}{P_{X_1,X_2}(x_1,x_2)}}|^+= o(\log{n})$. 
\qedsymbol

\section{Proof of Theorem \ref{th:converse}}
The proof builds upon the arguments provided in the proof Theorem 9 in \cite{shirani2020concentration}.
Let $n\in \mathcal{N}$, and $g$ and $g'$ be the adjacency matrices of the two graphs under a pre-defined labeling. Let $\hat{\sigma}$ be the output of the matching algorithm. Let $\mathbbm{1}_{C}$ be the indicator of the event that the matching algorithm mislabels at most $\epsilon_n$  fraction of the vertices, and assume that the event $\mathbbm{1}_{C}=1$ has probability at least $1-P_e$, where $\epsilon_n, P_e\to 0$ as $n\to \infty$.  Note that $\hat{\sigma}$ is a  function of $\sigma',g,g',\mathbf{B}$. So:

\begin{align*}
    &0=H(\hat{\sigma}|\sigma,g,g', \mathbf{B})
    \\&\stackrel{(a)}{=} 
    H(\sigma',\hat{\sigma},\mathbbm{1}_C|\sigma,g,g',\mathbf{B})- H(\sigma', \mathbbm{1}_C| \hat{\sigma},\sigma,g,g',\mathbf{B})
    \\&=  H(\sigma',\hat{\sigma},\mathbbm{1}_C|\sigma,g,g',\mathbf{B})-
    H(\sigma' |\mathbbm{1}_C, \hat{\sigma},\sigma,g,g',\mathbf{B})
    - H( \mathbbm{1}_C| \hat{\sigma},\sigma,g,g',\mathbf{B})
    \\&\stackrel{(b)}{\geq}
      H(\sigma',\hat{\sigma},\mathbbm{1}_C|\sigma,g,g',\mathbf{B})-
    H(\sigma' |\mathbbm{1}_C, \hat{\sigma},\sigma,g,g',\mathbf{B})-1
    \\&
=     H(\sigma',\hat{\sigma},\mathbbm{1}_C|\sigma,g,g',\mathbf{B})- 
    P(\mathbbm{1}_C=1) H(\sigma' |\mathbbm{1}_C=1, \hat{\sigma},\sigma,g,g',\mathbf{B})-
   \\&
   P(\mathbbm{1}_C=0) H(\sigma' |\mathbbm{1}_C=0, \hat{\sigma},\sigma,g,g',\mathbf{B})\!-\!1
    \\&\stackrel{(c)}{\geq} 
    H(\sigma',\hat{\sigma},\mathbbm{1}_C|\sigma,g,g',\mathbf{B})-
    \epsilon_n n\log{n}-
    P_e n\log{n}-1
   \\& \stackrel{(d)}{\geq}  H(\sigma'|\sigma,g,g',\mathbf{B})- (\epsilon_n+P_e)n\log{n}-1,
\end{align*}
where in (a) we have used the chain rule of entropy, in (b) we have used the fact that $\mathbbm{1}_C$ is binary, in (c) we define the probability of mismatching more than $\epsilon_n$ fraction of the vertices by $P_e$, and (d) follows from the fact that entropy is non-negative.
 As a result,
 \begin{align*}
     H(\sigma'|\sigma,g,g',\mathbf{B})\leq 
     (\epsilon_n+P_e)n\log{n}+1= o(n\log{n)},
 \end{align*}
{where $(\epsilon_n+P_e)n\log{n}$ is $o(n\log{n})$ since $\epsilon_n, P_e$ go to $0$ as $n\to\infty$.}
 Consequently,
\begin{align*}
&H(\mathbf{\sigma'}|\mathbf{B}) = \sum_{\mathbf{b}} P_{\mathbf{B}}(\mathbf{b}) H( \mathbf{\sigma'}|\mathbf{B}=\mathbf{b})
\stackrel{(a)}{=} 
\sum_{\mathbf{b}} P_{\mathbf{B}}(\mathbf{b}) \log{|\Sigma_{\mathbf{b}}|}=
\mathbb{E}(\log{|\Sigma_{\mathbf{B}}|})
    \\&
    \Rightarrow \mathbb{E}(\log{|\Sigma_{\mathbf{B}}|}){=} H(\mathbf{\sigma'}|\mathbf{B})
    \stackrel{(b)}{=} I(\mathbf{\sigma}';\mathbf{\sigma}, g, g'|\mathbf{B})+o(n\log{n}),
\end{align*}
where in (a) we have used the assumption that $\sigma'$ is chosen randomly and uniformly from the set of all labelings $\Sigma_{\mathbf{b}}$ which are consistent with the ambiguity sets corresponding to $\mathbf{b}$, and in (b) we have used the fact that $\epsilon, P_e\to 0$ as $n\to\infty$. We have:
\begin{align}
&\nonumber\mathbb{E}(\log{|\Sigma_{\mathbf{B}}|})=
I(\mathbf{\sigma}';\mathbf{\sigma}, g, g'|\mathbf{B})+o(n\log{n})\\
&\nonumber = I(\mathbf{\sigma}'; g'|\mathbf{B})+
I(\mathbf{\sigma}';\mathbf{\sigma}, g |g',\mathbf{B})+o(n\log{n})
\\&\nonumber\stackrel{(a)}{=}
I(\mathbf{\sigma}';\mathbf{\sigma}, g |g',\mathbf{B})+o(n\log{n})
\\&\nonumber
= I(\mathbf{\sigma}'; g|g',\mathbf{B})
+I(\mathbf{\sigma}'; g |g',\mathbf{\sigma},\mathbf{B})+o(n\log{n})
\\&\nonumber\stackrel{(b)}{=}
I(\mathbf{\sigma}'; g |g',\mathbf{\sigma},\mathbf{B})+o(n\log{n})
\\&\nonumber \stackrel{(c)}{\leq}
I(\mathbf{\sigma}',g'; g |\mathbf{\sigma},\mathbf{B})+o(n\log{n})
 \\&\nonumber
 \stackrel{(d)}{=}
I(g'; g |\mathbf{\sigma}, \mathbf{\sigma}',\mathbf{B})+o(n\log{n})
\\&\stackrel{(e)}{=} 
\frac{n(n-1)}{2} I(X;Y)+o(n\log{n}),
\label{eq:aux1}
\end{align}
where (a) follows from $\sigma', \mathbf{B}\indep g'$ since the edges are assumed to be independent of the ambiguity sets as mentioned in Remark \ref{Rem:indep:amb}, (b) follows from the fact that $\sigma',\mathbf{B}\indep g,g'$ by Remark \ref{Rem:indep:amb}, (c) is true due to the non-negativity of the mutual inforamtion, (d) follows from $\sigma,\sigma',\mathbf{B}\indep G$, and (e) follows from the fact that the edges whose vertices have different labels are independent of each other given the labels, and that the edges with similarly labeled vertices are generated according to $P_{X,Y}$. To complete the proof, we need to evaluate $\mathbb{E}(\log{\Sigma_{\mathbf{B}}})$. We proceed by considering each of the stochastic models for ambiguity set generation separately:

\noindent\textbf{Seeded Graph Matching:}
In this case $|\Sigma_{\mathbf{B}}|= (n(1-\gamma_n))!$ irrespective of the choice of the seed set, where $\gamma_n$ is the fraction of the seed vertices. So,
\begin{align}
   \mathbb{E}(\log{|\Sigma_{\mathbf{B}}|})
   = \log{((n(1-\gamma_n))!)}= {n(1-\gamma_n)}\log{{n}}+o(n\log{n}). 
\label{eq:seed}
\end{align}
Combining \eqref{eq:aux1} and \eqref{eq:seed} completes the proof in this case.

\noindent\textbf{Equiprobable Ambiguity Sets:} 
Fix $\epsilon>0$, and define $\mathcal{E}_{\epsilon}$ as the event that $\big||\Sigma_{\mathbf{B}}|- \mathbb{E}(|\Sigma_{\mathbf{B}}|)\big|\geq n\epsilon \mathbb{E}(|\Sigma_{\mathbf{B}}|)$. Furthermore, for a given labeling $\sigma'$, let $\mathbbm{1}(\sigma'\in \Sigma_{\mathbf{B}})$ be the event that $\sigma'$ is consistent with $\mathbf{B}$ so that $|\Sigma_{\mathbf{B}}|= \sum_{\sigma'} \mathbbm{1}(\sigma'\in \Sigma_{\mathbf{B}})$ and $\mathbb{E}(\Sigma_{\mathbf{B}})= \sum_{\sigma'} P(\sigma'\in \Sigma_{\mathbf{B}})= p^nn!$. Then, 
\begin{align}
   P(\mathcal{E}_{\epsilon}) 
   = P\left( \big|\sum_{\sigma'} \mathbbm{1}(\sigma'\in \Sigma_{\mathbf{B}})- p^nn!\big|\geq n\epsilon  p^nn!\right)
   \stackrel{(a)}{\leq} \frac{Var(\sum_{\sigma'} \mathbbm{1}(\sigma'\in \Sigma_{\mathbf{B}}))}{n^2\epsilon^2  p^{2n}(n!)^2},
   \label{eq:cheb1}
\end{align}
where in (a) we have used the Chebychev inequality. Note that 
\begin{align*}
    &Var\left(\sum_{\sigma'} \mathbbm{1}(\sigma'\in \Sigma_{\mathbf{B}})\right)= \sum_{\sigma',\sigma''}Cov\left(\mathbbm{1}(\sigma'\in \Sigma_{\mathbf{B}}),\mathbbm{1}(\sigma''\in \Sigma_{\mathbf{B}})\right)
    \\&= \sum_{\sigma'}\sum_{k=0}^n\sum_{\sigma'': ||\sigma'-\sigma''||_0= k}
    \mathbb{E}(\mathbbm{1}(\sigma'\in \Sigma_{\mathbf{B}},\sigma''\in \Sigma_{\mathbf{B}}))- 
    \mathbb{E}(\mathbbm{1}(\sigma'\in \Sigma_{\mathbf{B}}))\mathbb{E}(\mathbbm{1}(\sigma''\in \Sigma_{\mathbf{B}}))
    \\&= 
    \sum_{\sigma'}\sum_{k=1}^n\sum_{\sigma'': ||\sigma'-\sigma''||_0= k}
    p^{n}p^{k}- p^{2n}
    \\&
    \leq\sum_{k=0}^n n! k! p^{n+k}
    \leq n!p^n \sum_{k=0}^n (pk)^k,
\end{align*}
note that by assumption, $p= n^{-\alpha}$. So,
\begin{align*}
    &Var\left(\sum_{\sigma'} \mathbbm{1}(\sigma'\in \Sigma_{\mathbf{B}})\right)
    \leq
     n!p^n \sum_{k=0}^n (n^{-\alpha}k)^{k}
    \\& \leq 
      n!p^n\left( \sum_{k=0}^{n^{\alpha}} (n^{-\alpha}k)^{k}+ \sum_{k=n^{\alpha}+1}^{n} (n^{-\alpha}k)^{k}\right)
    \leq n!p^n  (n^{
    \alpha}+ n \cdot n!p^{n}) 
    \leq 2n(n!)^2 p^{2n},
\end{align*}
where the last ineqluality holds for large enough n. So, from \eqref{eq:cheb1}, we have:
\begin{align*}
    P(\mathcal{E}_{\epsilon})\leq \frac{2}{n\epsilon^2}.
\end{align*}
Consequently, $P(\mathcal{E}_{\epsilon})\to 0$ as $n\to \infty$ for any $\epsilon>0$. We have:
\begin{align*}
    &\mathbb{E}\left(\log{|\Sigma_{\mathbf{B}}|}\right)=
    P(\mathcal{E}_{\epsilon})
     \mathbb{E}\left(\log{|\Sigma_{\mathbf{B}}|}\Big|\mathcal{E}_{\epsilon}\right)+
     P(\mathcal{E}^c_{\epsilon})
     \mathbb{E}\left(\log{|\Sigma_{\mathbf{B}}|}\Big|\mathcal{E}^c_{\epsilon}\right)
     \\&\leq o(n\log{n})+
     \log{\left(p^nn!(1+n\epsilon)\right)}
     =n\log{n}+ n\log{p}+o(n\log{n}),
\end{align*}
where we have used the fact that $\mathbb{E}\left(\log{|\Sigma_{\mathbf{B}}|}\Big|\mathcal{E}_{\epsilon}\right)\leq n!$.
This along with Equation \eqref{eq:aux1} completes the proof for this case.

\noindent \textbf{Randomly Generated Ambiguity Set Distribution:}
The proof builds upon the proof of the previous case. Define $\mathcal{E}_\epsilon$ as before. Note that
\begin{align}
   P(\mathcal{E}_{\epsilon}) 
   = P\left( \big|\sum_{\sigma'} \mathbbm{1}(\sigma'\in \Sigma_{\mathbf{B}})- \mathbb{E}(P)^nn!\big|\geq    n\mathbb{E}^{\sqrt{n}}(P)n!\right)
   {\leq} \frac{Var(\sum_{\sigma'} \mathbbm{1}(\sigma'\in \Sigma_{\mathbf{B}}))}{ n^2 \mathbb{E}^n(P)(n!)^2},
   \label{eq:cheb2}
\end{align}

\begin{align*}
    &Var\left(\sum_{\sigma'} \mathbbm{1}(\sigma'\in \Sigma_{\mathbf{B}})\right)= \sum_{\sigma',\sigma''}Cov\left(\mathbbm{1}(\sigma'\in \Sigma_{\mathbf{B}}),\mathbbm{1}(\sigma''\in \Sigma_{\mathbf{B}})\right)
    \\&= \sum_{\sigma'}\sum_{k=0}^n\sum_{\sigma'': ||\sigma'-\sigma''||_0= k}
    \mathbb{E}(\mathbbm{1}(\sigma'\in \Sigma_{\mathbf{B}},\sigma''\in \Sigma_{\mathbf{B}}))- 
    \mathbb{E}(\mathbbm{1}(\sigma'\in \Sigma_{\mathbf{B}}))\mathbb{E}(\mathbbm{1}(\sigma''\in \Sigma_{\mathbf{B}})).
\end{align*}
Let $\sigma'(s)=i'_s, \sigma''(s)=i''_s, s,i'_s,i''_s\in [n]$. Then, $\mathbbm{1}(\sigma'_{\mathbf{B}}\in \Sigma_{\mathbf{B}})= \prod_{s\in [n]} \mathbbm{1}(B_{s,i'_s}=1)$ and $\mathbbm{1}(\sigma''_{\mathbf{B}}\in \Sigma_{\mathbf{B}})= \prod_{s\in [n]} \mathbbm{1}(B_{s,i''_s}=1)$. So, 
\begin{align*}
    &Var\left(\sum_{\sigma'} \mathbbm{1}(\sigma'\in \Sigma_{\mathbf{B}})\right)= \sum_{\sigma',\sigma''}Cov\left(\mathbbm{1}(\sigma'\in \Sigma_{\mathbf{B}}),\mathbbm{1}(\sigma''\in \Sigma_{\mathbf{B}})\right)
    \\&=
    \sum_{\sigma'}\sum_{k=0}^n\sum_{\sigma'': ||\sigma'-\sigma''||_0= k}
    \mathbb{E}\left(\prod_{s\in [n]} \mathbbm{1}(B_{s,i'_s}=1)\mathbbm{1}(B_{s,i''_s}=1)\right)- 
    \mathbb{E}\left(\prod_{s\in [n]} \mathbbm{1}(B_{s,i'_s}=1)\right)\mathbb{E}\left(\prod_{s\in [n]} \mathbbm{1}(B_{s,i''_s}=1)\right)
\end{align*}
Note that $ \mathbb{E}\left(\prod_{s\in [n]} \mathbbm{1}(B_{s,i'_s}=1)\right)= \prod_{s\in [n]}P(B_{s,i'_s}=1)$ since by definition $B_{s,i'_s}, s\in [n]$ are independent since $i'_s\neq i'_{s'}$ if $s\neq s'$. So,  $ \mathbb{E}\left(\prod_{s\in [n]} \mathbbm{1}(b_{s,i'_s}=1)\right)= \prod_{s\in [n]} \int_{p\in[0,1]} pf_P(p)dp= \mathbb{E}^n(P)$. Next we investigate $ \mathbb{E}\left(\prod_{s\in [n]} \mathbbm{1}(B_{s,i'_s}=1)\mathbbm{1}(B_{s,i''_s}=1)\right)$ for $\sigma'$ and $\sigma''$ such that $||\sigma'-\sigma''||_0=k$. Without loss of generality assume that $\sigma'(s)=\sigma''(s)=s, s\in [1,n-k]$ and $\sigma'(s)=\sigma''(t_s)=s, s,t_s\in [n-k+1,n]$. Then, 
\begin{align*}
    &\mathbb{E}\left(\prod_{s\in [n]} \mathbbm{1}(B_{s,i'_s}=1)\mathbbm{1}(B_{s,i''_s}=1)\right)=\prod_{s\in [1,n-k]}\mathbb{E}(\mathbbm{1}(B_{s,s}=1)) \prod_{s\in [n-k+1,n]} \mathbb{E}(\mathbbm{1}(B_{s,s}=1), \mathbbm{1}(B_{t_s,s}=1)) 
    \\&
    = \mathbb{E}^{n-k}(P)\prod_{s\in [n-k+1,n]}\int_{0\leq p_s\leq 1} p_s^2f_P(p_s) dp_s 
    = \mathbb{E}^{n-k}(P)\mathbb{E}^{k}(P^2)
\end{align*}
So, 
\begin{align*}
    &Var\left(\sum_{\sigma'} \mathbbm{1}(\sigma'\in \Sigma_{\mathbf{B}})\right)= \sum_{\sigma',\sigma''}Cov\left(\mathbbm{1}(\sigma'\in \Sigma_{\mathbf{B}}),\mathbbm{1}(\sigma''\in \Sigma_{\mathbf{B}})\right)
    \\&\leq 
    n!\sum_{k=0}^n k!  \mathbb{E}^{n-k}(P)\mathbb{E}^{k}(P^2)
    \\&= n!\mathbb{E}^{n}(P)\sum_{k=0}^n k!  \left(\frac{\mathbb{E}(P^2)}{\mathbb{E}(P)}\right)^k
    \leq n!\mathbb{E}^{n}(P)\sum_{k=0}^n k! 
    \leq n(n!)^2 \mathbb{E}^{n}(P),
\end{align*}
where we have used the fact that $P\in [0,1]$ to conlclude $\frac{\mathbb{E}(P^2)}{\mathbb{E}(P)}\leq 1$. So, from Chebychev's inequality, we have:
\begin{align}
   P(\mathcal{E}_{\epsilon}) 
   = P\left( \big|\sum_{\sigma'} \mathbbm{1}(\sigma'\in \Sigma_{\mathbf{B}})- \mathbb{E}(P)^nn!\big|\geq  n \mathbb{E}^{\sqrt{n}}(P)n!\right)
   \stackrel{(a)}{\leq} \frac{ n(n!)^2 \mathbb{E}^{n}(P)}{ n^2 \mathbb{E}^n(P)(n!)^2}
   =\frac{1}{n}.
\end{align}
Consequently, $P(\mathcal{E}_\epsilon)\to 0$ as $n\to \infty$. As a result,
\begin{align*}
    &\mathbb{E}\left(\log{|\Sigma_{\mathbf{B}}|}\right)=
    P(\mathcal{E}_{\epsilon})
     \mathbb{E}\left(\log{|\Sigma_{\mathbf{B}}|}\Big|\mathcal{E}_{\epsilon}\right)+
     P(\mathcal{E}^c_{\epsilon})
     \mathbb{E}\left(\log{|\Sigma_{\mathbf{B}}|}\Big|\mathcal{E}^c_{\epsilon}\right)
     \\&\leq o(n\log{n})+
     \log{\left(p^nn!(1+np^{-\sqrt{n}})\right)}
     =n\log{n}+ n\log{p}+o(n\log{n}),
\end{align*}
 \textbf{Symmetrically Correlated Ambiguity Sets:}
 In this case, we first compute $\mathbb{E}(\Sigma_{\mathbf{B}})$. Note that $\mathbb{E}(\Sigma_{\mathbf{B}})= \sum_{\sigma'} P(\sigma'\in \Sigma_{\mathbf{B}})$. Consider a $\sigma'$ consisting of $\frac{k}{2}$ transpositions. Then, $P(\sigma'\in \Sigma_{\mathbf{B}})= P^{\frac{k}{2}}_{U,V}(1,1)P^{n-k}_U(1)$. So, 
 \begin{align*}
     &\mathbb{E}(\Sigma_{\mathbf{B}})= \sum_{\sigma'} P(\sigma'\in \Sigma_{\mathbf{B}})
       \leq n!(max( P_{U,V}(1,1),P_U(1)P_V(1)))^{\frac{n}{2}},
\end{align*}
where we have used the fact that $ P(\sigma'\in \Sigma_{\mathbf{B}})\leq \max( P_{U,V}(1,1),P_U(1)P_V(1)))^{\frac{n}{2}}$.
\\Let $\gamma=$ $\max$ $( P_{U,V}(1,1),P_U(1)P_V(1))$ and fix $\alpha>0$. 
 Similar to the previous case, consider the Chebychev's inequality:
 \begin{align*}
   P(\mathcal{E}_{\epsilon}) 
   = P\left( \big|\sum_{\sigma'} \mathbbm{1}(\sigma'\in \Sigma_{\mathbf{B}})- \mathbb{E}(|\Sigma_{\mathbf{B}}|)\big|\geq    
   \gamma^{-n\alpha}\mathbb{E}(|\Sigma_{\mathbf{B}}|)\right)
  \leq 
  \frac{Var(\sum_{\sigma'} \mathbbm{1}(\sigma'\in \Sigma_{\mathbf{B}}))}{\gamma^{-2n\alpha} 
  \mathbb{E}^2(|\Sigma_{\mathbf{B}}|)},
\end{align*}
We have:
\begin{align*}
    &Var\left(\sum_{\sigma'} \mathbbm{1}(\sigma'\in \Sigma_{\mathbf{B}})\right)= \sum_{\sigma',\sigma''}Cov\left(\mathbbm{1}(\sigma'\in \Sigma_{\mathbf{B}}),\mathbbm{1}(\sigma''\in \Sigma_{\mathbf{B}})\right)
    \\&\leq 
    n!\sum_{k=0}^n \sum_{\sigma'':||\sigma'-\sigma''||_0=k}  Cov\left(\mathbbm{1}(\sigma'\in \Sigma_{\mathbf{B}}),\mathbbm{1}(\sigma''\in \Sigma_{\mathbf{B}})\right)
        \\&
       \leq n!\sum_{k=0}^{n} (n-k)!\gamma^{n-\frac{k}{2}}
       \leq n(n!)^2 \gamma^n+ n!(\sqrt{n})!,
\end{align*} 
where the last inequality follows from $\gamma=n^{-\zeta}, \zeta\in [0,1]$, which yields $(n-k)!\gamma^{n-\frac{k}{2}}\leq n!\gamma^n, k<n-\sqrt{n}$. 
 Also, note that
 \begin{align*}
     &\mathbb{E}(\Sigma_{\mathbf{B}})=  \sum_{\sigma'} P(\sigma'\in \Sigma_{\mathbf{B}})
    \geq \frac{n!}{2^{\frac{n}{2}}} P_{U,V}^{\frac{n}{2}}(1,1)
    \\& 
    \mathbb{E}(\Sigma_{\mathbf{B}})=  \sum_{\sigma'} P(\sigma'\in \Sigma_{\mathbf{B}})
    \geq \frac{n!}{2} (P_U(1)P_V(1))^{\frac{n}{2}}
    \geq \frac{n!}{2^{\frac{n}{2}}}(P_U(1)P_V(1))^{\frac{n}{2}}
    \\&
    \Rightarrow
    \mathbb{E}(\Sigma_{\mathbf{B}})\geq
    \frac{n!}{2^{\frac{n}{2}}}\gamma^{\frac{n}{2}} 
\end{align*}
So,
\begin{align*}
   &P(\mathcal{E}_{\epsilon}) 
   = P\left( \big|\sum_{\sigma'} \mathbbm{1}(\sigma'\in \Sigma_{\mathbf{B}})- \mathbb{E}(|\Sigma_{\mathbf{B}}|)\big|\geq    
   \gamma^{-n\alpha}\mathbb{E}(|\Sigma_{\mathbf{B}}|)\right)
  \leq 
  \frac{n(n!)^2 \gamma^n+ n!(\sqrt{n})!}{\gamma^{-2n\alpha} 
  \frac{n!^2}{2^{n}}\gamma^{n} }
  \\&= \frac{n2^n}{\gamma^{-2n\alpha}}+ \frac{2^n(\sqrt{n})!}{n!\gamma^{(1-2\alpha)n}},
\end{align*}
which goes to 0 as $n\to \infty$. So, 
\begin{align*}
    &\mathbb{E}\left(\log{|\Sigma_{\mathbf{B}}|}\right)=
    P(\mathcal{E}_{\epsilon})
     \mathbb{E}\left(\log{|\Sigma_{\mathbf{B}}|}\Big|\mathcal{E}_{\epsilon}\right)+
     P(\mathcal{E}^c_{\epsilon})
     \mathbb{E}\left(\log{|\Sigma_{\mathbf{B}}|}\Big|\mathcal{E}^c_{\epsilon}\right)
     \\&\leq o(n\log{n})+
     \log{((1+\gamma^{-n\alpha})\mathbb{E}(|\Sigma_{\mathbf{B}}|))}
    \\& =n\log{n}+\frac{n}{2}\log{\gamma}-\alpha n \log{\gamma}+o(n\log{n}).
\end{align*}
The condition in theorem statement follows by noting that the above holds for all $\alpha>0$. 
\qedsymbol

\label{App:th:6}
\end{appendices}
 \clearpage
\bibliographystyle{IEEEtran}
\bibliography{ref}

\end{document}